\newcommand{\E}{\mathbb{E}}
\newcommand{\Q}{\mathbb{Q}}
\newcommand{\R}{\mathbb{R}}
\def\ind{\mathds{1}}
\newcommand{\beq}{\begin{equation}}
\newcommand{\eeq}{\end{equation}}
\newcommand{\bi}{\begin{itemize}}
\newcommand{\bd}{\begin{description}}
\newcommand{\ei}{\end{itemize}}
\newcommand{\ed}{\end{description}}
\newcommand{\bc}{\begin{center}}
\newcommand{\ec}{\end{center}}
\newtheorem{Thm}{\bf Theorem}
\newtheorem{Prop}[Thm]{\bf Proposition}
\newtheorem{Lem}[Thm]{\bf Lemma}
\newtheorem{Ass}[Thm]{\bf Assumption}
\newtheorem{Rem}[Thm]{\bf Remark}
\newtheorem{Def}[Thm]{\bf Definition}
\newtheorem{Exa}[Thm]{\bf Example}
\newtheorem{remark}[Thm]{\bf Remark}
\numberwithin{equation}{section}
\numberwithin{Thm}{section}
\long\def\symbolfootnote[#1]#2{\begingroup%
\def\thefootnote{\fnsymbol{footnote}}\footnote[#1]{#2}\endgroup}
\title{Utility indifference pricing and hedging for structured
contracts in energy markets\footnote{This work was partly supported by the grant CPDA138873-2013 of the University of Padova ``Stochastic models with spatial structure and applications to new challenges in Mathematical Finance, with a focus on the post-2008 financial crisis environment and on energy markets''. Part of this work was done while the first and fourth authors were visiting LSE in November 2013 and while the second author was Visiting Scientist at the University of Padova in June 2014 and July 2015: the financial contribution of the two institutions is kindly acknowledged. 
Moreover, the authors wish to thank Ren\'e A\"id, Enrico Edoli, Paola Mannucci and the participants to the 2014 ISEFI Conference in Paris for valuable comments.
}}
\author{Giorgia Callegaro\footnote{Department of Mathematics,
University of Padova, via Trieste 63, 35121 Padova, Italy.}\\
\and Luciano Campi\thanks{Corresponding author. Address:
London School of Economics, Department of Statistics, Columbia
House, 10 Houghton Street, London WC2A 2AE, United Kingdom.
        Email: L.Campi@lse.ac.uk}\\ 
\and Valeria Giusto\footnote{Phinergy S.r.l.s., Via della Croce Rossa 112, 35129 Padova, Italy.} \\         
\and Tiziano Vargiolu\footnotemark[2] \\
}
\date{February 20, 2016  \\
}
\definecolor{greenGio}{cmyk}{1.0,0.0,1.0,0}
\definecolor{blueGio}{cmyk}{1.0,0.0,0.0,0}
\definecolor{greyGio}{rgb}{0.5,0.5,0.5}
\begin{document}

\maketitle

\begin{abstract}
In this paper we study the pricing and hedging of structured products in
energy markets, such as swing and virtual gas storage, using the exponential utility indifference pricing approach in a general incomplete multivariate market model driven by finitely many stochastic factors. The buyer of such contracts is allowed to trade in the forward market in order to hedge the risk of his position.
We fully characterize the buyer's utility indifference price of a given product in terms of continuous viscosity solutions of
suitable nonlinear PDEs.
This gives a way to identify reasonable candidates for the optimal
exercise strategy for the structured product as well as for the corresponding hedging strategy.
Moreover, in a model with two correlated assets, one traded and one nontraded, we
obtain a representation of the price as the value function of an
auxiliary simpler optimization problem under a risk neutral
probability, that can be viewed as a perturbation of the minimal
entropy martingale measure.
Finally, numerical results are provided.
\medskip

\noindent {\bf{Keywords}}: Swing contract, virtual storage
contract, utility indifference pricing, HJB equations, viscosity
solutions, minimal entropy martingale measure.


\end{abstract}

\section{Introduction}
In the last fifteen years, since the start of the energy market deregulation and
privatization in Europe and in the U.S., the study of energy
markets became a challenging topic from both a practical and a theoretical perspective. Especially important is the problem of pricing and hedging of energy contracts. This is far from being trivial because of the peculiarity of the models and since these contracts typically have a very complex structure, incorporating optionality features which can be exercised by the buyer at multiple times. The two main examples of products used in energy markets for primary supply are swing contracts and forward contracts.
While the structure of the forward contracts is rather simple, swing contracts are much more involved as they give the buyer some degrees of freedom about the
quantity of energy to buy for each sub-period, usually with daily
or monthly scale, subject to a cumulated constraint over the
contract period. This flexibility is much welcomed by the contract
buyers, as energy markets are affected by many unexpected events
such as peaks in consumption related to sudden weather changes,
breakdowns of power plants, financial turmoils and so on.
Many other kinds of
contract are traded in the energy market, they are often negotiated
over-the-counter, and some of them, e.g. virtual storage
contracts, also include optionality components similar to the ones of swing contracts.

The pricing of these products has
a consolidated tradition in discrete time models (see, e.g. \cite{EFRV} or \cite{HeLaRusso2013} and references therein), which is manly based on dynamic programming. The two papers \cite{PaBaBo09} and \cite{PaBrWil10} propose a different method based on optimal quantization theory. In continuous time models, the first approaches were based on optimal switching techniques (e.g., \cite{CL10}) or multiple stopping (e.g., \cite{CT08}). In all these articles the optionality features of the structured product can be exercised over a discrete set of stopping times, that can be chosen by the buyer.

A different approach consists in approximating the contract payoff with its continuous time counterpart. This idea has been proposed in \cite{BLN11} (and further exploited in \cite{BCV13}) for swing
contracts and in \cite{ChenForsyth06,Felix12,TDR09} for virtual
storage contracts. Other examples of structured contracts can be treated with the same methodology, see e.g. Benth and Eriksson \cite{BE} for flexible load contracts and tolling agreements. The main advantage of this approach is that it makes the pricing problem more tractable, since it allows using the stochastic control theory in continuous time, based on PDE methods. In those papers, the price of a structured
contract is defined - in analogy with American options - as the supremum, over all the strategies available to the buyer,
of the expected payoff, where the expectation is taken under a given risk-neutral measure. When the model for the underlying of the structured contract is Markovian, as it happens in most of the models used in practice, the pricing problem
reduces to solving the corresponding Hamilton-Jacobi-Bellman (HJB) equation. Notice that the choice of the risk-neutral measure is not at all obvious since energy market models are typically incomplete, because of the presence of assets bearing non tradable risks. Moreover, with the exception of \cite{warin} which focuses on gas storage contracts and uses a delta-hedging
approach, the problem of hedging the risk coming from a long position in those structured products is not considered in those papers. Hedging such contracts can be quite a delicate task in energy markets, since the underlying of the contracts is often not tradable, hence the buyer has to trade in some other asset with a good correlation with the underlying.  For an extensive review of the existing literature with descriptions of the most traded contracts and a detailed comparison between the main articles we refer to the very recent book \cite{Aidbook}. \medskip

Our contribution to the literature consists in building on the idea of continuously approximating the payoff as in  \cite{BCV13,BLN11}, in order to provide a general framework, where both problems of pricing and hedging of structured contracts can be solved in a consistent fashion. The main novelty of this paper is that the buyer of the given structured contract is allowed to (at least partially) hedge his position by trading in forward contracts, written on the underlying of the structured contract itself or on some asset correlated with the underlying. We model the forward market as a general incomplete
multivariate market model with finitely many forward contracts (with different maturities), evolving over time as diffusions whose coefficients depend on a certain number of exogenous stochastic factors with Markovian dynamics. The underlying of the structured contract is defined as a function of such factors. This setting includes many models that have been previously proposed and studied in the literature, e.g. \cite{ACLP,CL06,CV08,SS00} to cite a few.

Being the market incomplete, we adopt the utility indifference price (henceforth UIP) approach, which is one of the most appealing ways of pricing in incomplete markets, since it naturally incorporates the buyer preferences in the price of the contract. We assume that the preferences of the buyer can be encoded in an exponential utility function with a risk aversion parameter $\gamma >0$. The UIP approach has been extensively used for pricing European and American options in a wide range of financial market models. We refer to \cite{HendersonHobson09} for an excellent survey on this approach. This approach was already used for energy derivatives in Fiorenzani \cite{Fiorenz}, but to our knowledge never for the evaluation of structured products.

We apply this method for evaluating a rather general structured derivative. Its buying UIP will be characterized as the difference between the two log-value functions of the agent (with and without the contract), that can be obtained as the unique viscosity solutions of a suitable HJB equations. Our results are consistent with the ones in
\cite{BCV13,BLN11,ChenForsyth06,Felix12,TDR09}, in the case of complete market models. Moreover, the shape of such HJB equation gives reasonable candidates for the optimal withdrawal strategy of the
structured product, as well as for the related hedging strategy.

Finally, we push our general results further in two specific examples. One of them is a class of models with two risky assets, one traded and one nontraded, and constant
correlation. This includes models with a nontraded asset and basis risk, which have been studied by many authors (see, e.g., the papers \cite{Davis06,Henderson02,Monoyios04} to cite only a few). For these models, we obtain a representation of the price as the value
function of an auxiliary simpler optimization problem under a
risk-neutral probability, that can be viewed as a perturbation of
the minimal entropy martingale measure. Such a perturbation is due to the dependence of drift and volatility of the traded asset on the nontraded one and depends on the value function without the contract. It seems that such a measure change is new to the incomplete market literature. The second example is based on a slight generalization of the two-factor model developed in \cite{CV08} for energy markets, where the factors can be correlated.
\medskip

The paper is organized as follows. In Section \ref{sec:theProblem} we formulate the
problem of pricing, by introducing the general payoff of the structured
contracts, the market model and the (exponential) utility
indifference price. In Section \ref{sec:UIPvisco}, we characterize the UIP in terms
of viscosity solutions of suitable HJB equations. In Section \ref{ex} we consider the two examples described above while Section \ref{sec:num} presents some
numerical applications of our results. Finally, Section \ref{conclusion} concludes.\\

{\noindent \textbf{Notation.}
In what follows, unless explicitly stated, vectors will be column
vectors, the symbol ``*'' will denote transposition and the trace
of a square matrix $A$ will be denoted by $\textrm{tr}(A)$.
Furthermore, $\langle a, b \rangle := a^* b$ will stand for the
Euclidean scalar product. We choose as matricial norm $| A | =
\sqrt{{\rm{tr}}(AA^*)}$. On the set $\mathcal S_n$ of all
symmetric squared matrices of order $n$, we define the order
$A\le B$ if and only if $B-A \in \mathcal S_n ^+$, the subspace
of nonnegative definite matrices in $\mathcal S_n$. We will
denote by $I_n$ the identity matrix of dimension $n$.

\section{Formulation of the problem}\label{sec:theProblem}

Let $T>0$ be a finite time horizon. All the processes introduced below
will be defined on the canonical probability space $(\Omega, \mathcal F,
\mathbb P)$, where $\Omega := C([0,T];\mathbb R^d)$ is the space of all continuous functions from $[0,T]$ into $\mathbb R^d$. For $\omega \in \Omega$, we set $W_t (\omega) = \omega(t)$ and define $(\mathcal
F_t)_{t\in [0,T]}$ as the smallest right-continuous filtration such that $W$ is optional. Moreover, $\mathcal F := \mathcal F_T$. We let $\mathbb P$ be the Wiener measure on $(\Omega, \mathbb F_T)$. We can assume without loss of generality that such a filtration is complete.

\subsection{Structured products}
In this section we give a short description of the main structured products that are traded in energy markets. The typical payoff is given by a family of random variables
\begin{equation} \label{eq:C_T def}
C_T^u :=\int_0^T L(P_s,Z_s^u,u_s) ds + \Phi(P_T,Z_T^u),
\end{equation}
indexed by a control $u$, which typically represents the marginal
quantity of commodity purchased and it belongs to a suitable set of
admissible controls $\mathcal U$ that we will specify later. In particular, the admissible controls take values in some bounded interval $[0,\bar u]$ for a given threshold $\bar u >0$.
The variable $P$ in the equation (\ref{eq:C_T def}) above denotes the spot price of
the commodity (e.g., gas) and $Z^u _t := z_0 + \int_0^t u_s ds$
for all $t \in [0,T]$, for some initial value $z_0
\ge 0$. For technical reasons, that will become clear in the proofs of our results, we will need the following assumption on the structured products.

\begin{Ass}\label{AssC} 
The functions $L: \mathbb R \times [0,\bar u T] \times
[0,\bar u] \to \mathbb R$ and $\Phi : \mathbb R \times [0,\bar u
T] \to \mathbb R$ in (\ref{eq:C_T def}) are continuous and bounded. 
\end{Ass} 

The most common structured products in energy markets are swing
and virtual storage contracts. More details are given just below. See also the subsequent Remark \ref{remC} explaining how one can safely modify these contracts in order to satisfy Assumption \ref{AssC}.

\begin{Exa}[Swing contract]\label{ex:swing} {\rm
For a swing contract one has (see, e.g., \cite{BCV13,BLN11})
$$ L(p,z,u) = u (p - K), $$
where $K$ is the purchase price or strike price, and $u$ is any admissible control. These products usually include some additional features,
such as inter-temporal constraints on $u$ or on the cumulated
control $Z^u$ or some penalty function appearing in the payoff.
More precisely, constraints on $u$ and $Z^u$ are typically of the
form $Z_T^u \in [m,M]$, with $0 \leq m < M$,
with possibly further intermediate constraints on $Z_{t_i}^u$,
$t_i < T$, $i = 1,\ldots,k$. In the absence of such additional
constraints, a penalty is usually present which can be expressed
as a function $\Phi$ of the terminal spot price $P_T$ and
cumulated consumption $Z^u_T$. A typical form of $\Phi$ is
\begin{equation}\label{eq:Phi}
\Phi(p,z) = - C \left( (m - z)^+ + (z - M)^+\right) 
\end{equation}
for constants $C>0$ and $0 \leq m < M$(see \cite{BCV13,BLN11} and
references therein). We will focus on the latter case, i.e., a
non-zero penalty function $\Phi (P_T ,Z_T^u)$ without any other
contraints on the admissible controls.
} \end{Exa}

\begin{Exa}[Virtual storage contract]\label{ex:virtualSt} {\rm
These products replicate a physical gas storage position, while
being handled as pure trading contracts. In this case one has
$$ L(p,z,u) = p (u - a(z,u)), \qquad \Phi(p,z) = - C (M - z), $$
with $C,M > 0$ suitable constants, $a(z,u) := \bar a \ind_{u<0}$
and where the control $u$ is such that
$$ u_t \in [u_{\mathrm{in}}(Z_t^u),u_{\mathrm{out}}(Z_t^u)]
,\quad t\in [0,T], $$
where $u_{\mathrm{in}},u_{\mathrm{out}}$ are suitable
deterministic functions given by the physics of fluids: their
typical shapes are
$$ u_{\mathrm{in}}(z) := - K_1 \sqrt{\frac{1}{z + Z_b} + K_2},
\qquad u_{\mathrm{out}}(z) := K_3 \sqrt{z} $$
with $Z_b,K_i > 0$, $i = 1,2,3$, given constants
\cite{ChenForsyth06,Felix12,TDR09}.} \end{Exa}

\begin{Rem}\label{remC}
{\rm The boundedness of $L$ as in Assumption \ref{AssC} is not verified in the two Examples \ref{ex:swing} and
\ref{ex:virtualSt}, where $L$ is linear in $p$, which can in
principle take any real value. In practice, one can artificially
bound $L$, for example by introducing
$$
\tilde L(p,z,u) := \max(- \kappa, \min(L(p,z,u),\kappa)),
$$
so that $|\tilde L(p,z,u)| \leq \kappa$ for all $(p,z,u)$, for a
suitably chosen and large enough threshold $\kappa >0$ such that
the instantaneous profit should not be larger than $\kappa$ in
absolute value with high probability. The same truncation argument can be applied to
the penalty function $\Phi(p,z)$. Alternatively, one could truncate the unbounded variable $p$ appearing in both payoffs $L$ and $\Phi$.} 
\end{Rem}

\subsection{The market model}\label{model}

The spot price of the commodity, $P$, underlying the structured products, is modelled as 
$P_t := p(t,X_t)$, where $p:[0,T] \times \mathbb R^m \to \mathbb R$ is a measurable function and $X$ represents the factors driving the market. We assume that the process $X$ has Markovian dynamics
\begin{equation}\label{dynX}
dX_t = b(t,X_t)\ dt + \Sigma^*(t,X_t)\ dW_t, \quad X_0 =x \in
\mathbb R^m,
\end{equation}
with drift $b:[0,T] \times \mathbb R^m
\rightarrow \mathbb R^m$ and volatility matrix $\Sigma:[0,T] \times \mathbb R^m
\rightarrow \mathbb R^{d \times m}$. \medskip

We also assume that $n\leq d$ forward contracts on the commodity $P$ are traded in the
market, with maturities $T_1 < \ldots < T_n$, with $T_1 \geq T$.
Letting $F^i$ to denote the price of the forward contract with maturity
$T_i$, $i = 1,\ldots,n$, we assume that the dynamics of $F :=
(F^1,\ldots,F^n)$ is given by
\begin{equation}\label{eq:marketDynIncompletemulti}
dF_t = \mathrm{diag}(F_t) \left( \mu_F (t,X_t) dt + \sigma_F ^* (t,X_t)
dW_t \right), \quad F_0 = f_0 \in \mathbb R^n,
\end{equation}
for some functions $\mu_F : [0,T]\times \mathbb R^m \to \mathbb R^n$ and
$\sigma_F: [0,T]\times \mathbb R^m \to \mathbb R^{d\times n}$. 

Assumptions on the coefficients of $X$ and $F$ are given below. We will always assume throughout the paper that the interest rate is zero.\medskip

Notice that the forward contracts are not necessarily written on
the commodity with spot price $P$, as they could also be written
on some correlated commodity. For instance, $P$ could be the spot price
of gasoline, while the $F$'s are written on oil, as in
\cite{CL06,Fiorenz}. This can be also due to illiquidity or
to the fact that forward contracts relative to the commodity do not exist: for
a detailed discussion of this phenomenon, see \cite[Section
2.3]{CL06}.\medskip

We will always work under the following standing assumptions on the coefficients of the model:
\begin{Ass}\label{ass-coeff}
\begin{enumerate}
\item[(i)] The function $p: [0,T] \times \mathbb R^m \to \mathbb
R$ is continuous.
\item[(ii)] The coefficients $b:[0,T] \times \mathbb R^m
\rightarrow \mathbb R^m$ and $\Sigma:[0,T] \times \mathbb R^m
\rightarrow \mathbb R^{d \times m}$ of the factor process $X$ are continuous functions, Lipschitz in $x$ uniformly in $t$ and with linear growth in $x$ uniformly in $t$.
\item[(iii)] The drift $\mu_F : [0,T]\times \mathbb R^m \to \mathbb R^n$ and the volatility $\sigma_F: [0,T]\times \mathbb R^m \to \mathbb R^{d\times n}$ are continuous functions, Lipschitz in $x$ uniformly in $t$ and with linear growth in $x$ uniformly in $t$. \end{enumerate}
\end{Ass}
Under such assumptions, the SDEs
(\ref{dynX}) and (\ref{eq:marketDynIncompletemulti}) are
well-known to admit a unique strong solution $(X,F)$ such that
$X_0 =x$ and $F_0 = f_0$ (see, e.g., Theorem 13.1 in \cite[Chapter V]{RW2}).

\subsection{Admissible strategies and utility indifference price}

We consider an agent whose preferences are modelled by an exponential utility
function $U(x)=-\frac{1}{\gamma}e^{-\gamma x}$ , $x \in
\mathbb R$, with risk aversion parameter $\gamma > 0$. We assume that (s)he has a long position in $q \geq 0$ units of a given structured product
with payoff $C_T = (C_T
^u)_{u\in \mathcal U}$ with $C_T ^u $ as in (\ref{eq:C_T def}). Moreover, in order to hedge away the risk attached to such a contract, (s)he trades in the financial
market of forward contracts described in the previous section. At any time $s \in [0,T]$, the agent
invests the amount of wealth $\pi^i_s$ in the
forward contracts $F^i$ with $i = 1,\ldots,n$. Hence the evolution of the agent's portfolio is
$$
\left\langle \pi_s, \frac{dF_s}{F_s} \right\rangle = \sum_{i=1}^n
\pi^i_s \frac{dF^i_s}{F^i_s} = \langle \pi_s, \mu_F (s,X_s) ds +
\sigma_F^* (s,X_s) dW_s \rangle,
$$
where we recall that $\langle \cdot, \cdot \rangle$ denotes the Euclidean scalar product in $\R^n$ and we use the notation
$$ \frac{dF_s}{F_s} :=
\left(\frac{dF^i_s}{F^i_s}\right)_{i=1,\ldots,n} = \mu_F (s,X_s)
ds + \sigma_F^* (s,X_s) dW_s , \quad s\in [0,T] .$$
At this point, we need to specify the set $\mathcal A$ of admissible strategies.

\begin{Def} \label{Def:2.6}
Let $\bar u >0$ be a given threshold. The set of admissible
controls $\mathcal A$ is the set of all couples $(u,\pi)$, where
$u$ is any adapted process such that $u_t \in [0,\bar u]$ for all
$t\in [0,T]$, and $\pi$ is any progressively measurable $\mathbb
R^n$-valued process such that
\begin{equation}
\sup_{t\in [0,T]} \mathbb E\left [\exp\left (\varepsilon | \pi_t |^2 \right)\right] < \infty , \label{adm}
\end{equation}
for some $\varepsilon >0$.
We will denote by $\mathcal U$ the set of all admissible controls
$u$. Moreover, $\mathcal A_t$ (resp. $\mathcal U_t$) will be the
set of admissible controls $(u,\pi)$ (resp. admissible controls
$u$) starting from $t$.
\end{Def}

Now, we are ready to introduce the utility indifference
(buying) price of $q$ units of the structured product $C_T$. We will use the notation $C_{t,T}^u$ for the payoff
of the structured contract $C_T^u$ starting at time $t$, i.e.,
$$
C_{t,T}^u =  \int_t^T L(P_s,Z_s^u,u_s) ds + \Phi(P_T,Z_T^u), \quad t\in [0,T].
$$
Moreover, we set $C_T^u = C_{0,T}^u$.

\begin{Def}
The utility indifference (buying) price at time $t$ for a
position $q \ge 0$ in the structured product, when starting from
the initial portfolio value $y_t$, is defined as the unique
$\mathcal F_t$-measurable random variable $v_t $ solution (whenever it exists) to
\begin{equation} \label{UIP}
V(y_t - v_t,q) = V(y_t,0),
\end{equation}
where
\begin{equation}\label{eq:V}
V(y_t,q):= \sup_{(u,\pi) \in \mathcal A_t} \mathbb E_t \left[
-\frac{1}{\gamma}\exp \left( - \gamma \left( y_t + \int_t^T
\left\langle \pi_s, \frac{d F_s}{F_s} \right\rangle + q C_{t,T}^u
\right) \right) \right],
\end{equation}
and $\mathbb E_t$ stands for the conditional expectation given
$\mathcal F_t$. \end{Def}

Clearly, $V(y_0,q)$ gives the maximal expected utility from
terminal wealth, computed at time $0$, that an agent with an
exponential utility can obtain starting from an initial wealth
$y_0$ and having a position $q \ge 0$ in the structured product. Therefore, the (buying) UIP defined above represents the highest price the buyer is willing to pay for $q$ units of the structured contract.\medskip

The maximization problem (\ref{eq:V}) can be easily translated into a standard Markovian control problem by suitably redefining the set of state variables as follows.
Let $t \in [0,T]$. First, using Equation (\ref{eq:C_T def}), we can rewrite the terminal wealth as follows
$$ y_t + \int_t^T \left\langle \pi_s, \frac{d F_s}{F_s}
\right\rangle + q C_{t,T}^u = y_t +\int_t^T \left\langle \pi_s,
\frac{d F_s}{F_s} \right\rangle + q \int_t^T L(P_s,Z_s^u,u_s) ds
+ q \Phi(P_T,Z_T^u). $$
Using the fact that $P_t = p(t,X_t)$ is a function of the factor process $X$, we
obtain that the value function in (\ref{eq:V}) equals
\begin{equation}\label{ValFunc}
V(t,x,y,z ; q) := \sup_{(u,\pi) \in \mathcal A_t} \mathbb
E_{t,x,y,z} \left[ G \left( X_T, Y_T^{u,\pi},Z_T^{u} \right)
\right],
\end{equation}
where the reward function $G$ is given by
\begin{equation} \label{G}
G(x,y,z) := -\frac{1}{\gamma} e^{-\gamma(y + q \Phi(p(T,x),z))},
\end{equation}
and the state variables $(X,Y^{u,\pi},Z^u)$ evolve as
\begin{equation}\label{dynXYZ}
\left \{
\begin{array}{lcl}
d X_s &=& b(s,X_s) ds +  \Sigma^*(s,X_s) dW_s,\\
dY_s^{u,\pi} &=& \left( \langle \pi_s,\mu_F (s,X_s) \rangle + q
L(p(s,X_s),Z_s^u,u_s) \right) ds + \langle \pi_s, \sigma_F^* (s,X_s) dW_s
\rangle,\\
d Z_s^u &=& u_s ds,
\end{array}
\right.
\end{equation}
with initial conditions $(X_t, Y^{u,\pi}_t,Z^u _t) = (x,y,z)$. Notice that the linear growth properties set in Assumption \ref{ass-coeff} combined with the boundedness of $L$ in Assumption \ref{AssC} give the following estimate for the controlled state process $(X,Y^{u,\pi},Z^u)$:
\begin{equation}\label{estimate} \mathbb E_{t,x,y,z} \left [ \sup_{t \le \tau \le T} \left | \left(X_\tau , Y^{u,\pi}_\tau , Z^u_\tau \right) \right |^p \right] \le C_{u,\pi} (1+|(x,y,z)|^p), \quad t \in [0,T), \quad p\ge 1,  \end{equation}
for some constant $C_{u,\pi}>0$, which depends possibly on the control $(\pi, u)$ and is uniform in $t$.

\begin{remark}
{\rm Observe that the linear growth condition on $b$ and $\Sigma$ (cf. Assumption \ref{ass-coeff}(ii)) imply, through an application of Gronwall's lemma, that
\begin{equation}\label{expX}
\sup_{t\in [0,T]} \mathbb E\left[ e^{\eta | X_t|^2} \right] < \infty,
\end{equation}
for some $\eta >0$.}
\end{remark}

Within this formulation, the UIP of $q \ge 0$ units of the structured product
is the unique solution $v_t = v(t,x,y,z;q)$ (whenever it exists) to
$$
V(t,x,y-v_t,z;q) = V(t,x,y,z;0).
$$

\begin{Rem} \label{reparam}
{\rm In principle, the controls associated to the
virtual storage contract described in Example \ref{ex:virtualSt}
do not satisfy Definition \ref{Def:2.6}, where the control $u_t$
belongs to $[0,\bar u]$ with $\bar u$ constant. However, this
example can be reduced to our setting by simply reparameterizing
the control. In fact, one could define a new control $c$ with
values in $[-1,1]$ such that the old control $u$ satisfies $u_t =
f(c_t,Z_t)$ for a suitable function $f (c,z)$ given by
$$ f(c,z) := \left\{ \begin{array}{ll}
c K_1 \sqrt{\frac{1}{z + Z_b} + K_2},	& 0 \leq c \leq 1, \\
c K_3 \sqrt{z},							& -1 \leq c \leq 0, \\
\end{array} \right. $$
and $Z$ solves
$$ dZ_t = f(c_t,Z_t)\ dt ,\quad Z_0 = z_0.$$}
\end{Rem}
\begin{Rem}\label{PTW-Lud}
\rm{Here, we briefly discuss two papers, \cite{Ludkovski08} and \cite{PorchetTouziWarin}, that do not fit (strictly speaking) the literature on structured products but that are somewhat related. Indeed, they both deal with the pricing of a physical/industrial asset using a UIP approach with an investment component. However, even though the optimization problems in  \cite{Ludkovski08,PorchetTouziWarin} are mathematically similar to the one considered here, the controls affecting the asset are switching controls with finitely many states. Hence their methods, that are based on optimal switching and BSDEs, are different from ours. Finally, our model is more specific than theirs as it is tailor-made for the pricing and hedging of structured contracts on energy.
}\end{Rem}

\section{Characterization of the UIP
with viscosity solutions}\label{sec:UIPvisco}

In this section we characterize, under some further technical assumptions given below, the UIP in terms of viscosity solutions of suitable Cauchy problems. More precisely,  we prove that the log-value
functions for problem (\ref{ValFunc}) with zero initial wealth, defined as
\begin{equation} \label{J}
J(t,x,z;q) := - \frac{1}{\gamma} \log \left( - V(t,x,0, z;q) \right), \quad q \ge 0,
\end{equation}
can be characterized as the unique continuous viscosity
solutions with quadratic growth to a suitable Cauchy problem.
The UIP is obtained from there as the difference between the two log-value
functions, corresponding to the problems with and without the structured products. This is done using some techniques developed in Pham
\cite{Pham02} together with recent results on uniqueness for a
class of second order Bellman-Isaacs equations, established in Da
Lio and Ley \cite{DaLioLey}.\medskip

\subsection{Heuristics on the value function PDE}

In this section we derive, in a heuristic way, the PDE that the
value functions appearing in the definition of UIP are expected to satisfy. It is
a classical property in the presence of an exponential utility function (see, e.g., the
papers \cite{B2,B1,BecSch,HendersonHobson09,VV}) that one can factor out the initial wealth $y$ so that
$$ V(t,x,y,z;q) = e^{- \gamma y} V(t,x,0,z;q), \quad y \in \mathbb R. $$
Hence, by definition of UIP, we deduce
$$ e^{- \gamma (y-v)} V(t,x,0,z;q) = V(t,x,y-v,z;q) =
V(t,x,y,z;0) = e^{- \gamma y} V(t,x,0,z;0) $$
so that the UIP $v$ is given by
\begin{equation}\label{UIP-J}
v = - \frac{1}{\gamma} \log \frac{V(t,x,0,z;q)}{V(t,x,0,z;0)}= J(t,x,z;q) - J(t,x,z;0),
\end{equation}
where $J$ denotes the log-value function defined in (\ref{J}). From the general theory of stochastic optimal control with Markovian state variables it is clear that the value function $V$ is expected to satisfy the following HJB equation
\begin{equation} \label{eq:HJB_V}
V_t(t,x,y,z;q) + \sup_{(u,\pi) \in [0,\bar u]\times \mathbb R^n}
\mathcal L^{u,\pi} V(t,x,y,z;q) = 0,
\end{equation}
with terminal condition $V(T,x,y,z ; q) = G(x,y,z)$ and where  
$$ \mathcal L^{u,\pi} V
= \left( \langle \pi,\mu_F \rangle + q L \right) V_y + \langle b,
V_x \rangle + u V_z + \frac{1}{2} |\pi^{*} \sigma_F^*|^2 V_{yy} +
\frac{1}{2} \mathrm{tr} \left( \Sigma \Sigma^* V_{xx} \right)
+  \pi^{*} \sigma_F^* \Sigma^* V_{xy} $$
is the generator of the state variable $(X,Y,Z)$.
Recalling that $V(t,x,y,z;q) = - e^{- \gamma y - \gamma
J(t,x,z;q)}$, we can easily deduce from (\ref{eq:HJB_V}) the following PDE for the log-value function $J :=J(t,x,y,z;q)$:
\begin{equation}\label{eq:HJB_J_general}
\begin{array}{c}
\displaystyle J_t + \sup_{(u,\pi) \in [0,\bar u]\times \mathbb
R^n} \big[ \langle \pi, \mu_F \rangle + qL + \langle b, J_x
\rangle + u J_z - \frac{1}{2} \gamma |\pi^{*} \sigma_F^*|^2 \\
- \frac{1}{2} \gamma | \Sigma J_x|^2 + \frac{1}{2} \textrm{tr}
\left(\Sigma^* \Sigma J_{xx} \right) - \gamma \pi^{*} \sigma^*_F
\Sigma J_x \big] = 0.
\end{array}
\end{equation}
The Hamiltonian therein is maximised by the control $\hat \pi$, given by
\begin{equation} \label{optimalpi}
\hat \pi^q = (\sigma_F^* \sigma_F)^{-1} \left(\frac{\mu_F}{\gamma}
- \sigma_F^* \Sigma J_x \right).
\end{equation}
Substituting $\hat \pi^q$ into the equation
(\ref{eq:HJB_J_general}) leads to
\begin{equation}\label{eq:HJB_Jpi*multi_general_B}
\begin{array}{c}
\displaystyle J_t + \frac{1}{2 \gamma} \langle (\sigma_F^*
\sigma_F)^{-1} \mu_F,\mu_F\rangle + \langle \bar b , J_x \rangle
+ \sup_{u \in [0,\bar u]} \Big[ u J_z + qL \Big] \\
-\frac{1}{2} \gamma J_x^{*} B J_x + \frac{1}{2} \textrm{tr}
\left(\Sigma^* \Sigma J_{xx}\right) = 0,
\end{array}
\end{equation}
where
\begin{equation}\label{bbar} \bar b := b - \Sigma^* \sigma_F (\sigma_F^* \sigma_F)^{-1}
\mu_F \end{equation}
and $B$ is a $m \times m$ symmetric matrix given by
\begin{equation}\label{eq:B}
B := \Sigma^* \Sigma - (\sigma_F^* \Sigma)^* (\sigma_F^*
\sigma_F)^{-1} (\sigma_F^* \Sigma ) = \Sigma^* (I_d - \sigma_F
(\sigma_F^* \sigma_F)^{-1} \sigma_F^* ) \Sigma.
\end{equation}
The terminal condition for $V$ translates into
\begin{equation} \label{terminalJ}
J(T,x,z;q) = \frac{\log \gamma}{\gamma} + q\Phi(p(T,x),z), \quad
(x,z) \in \mathbb R^m \times [0, \bar u T].
\end{equation}
\begin{Rem} \rm{In order to compute the UIP as in Equation (\ref{UIP}), we first calculate $J(t,x,z;0)$, which
satisfies Equation (\ref{eq:HJB_Jpi*multi_general_B}) with the
terminal condition $J(T,x,z;0) = \frac{\log \gamma}{\gamma}$. It
is a classical and intuitive result that, in this situation,
$J(t,x,z;0)$ does not depend on $z$. Denoting $J(t,x,z;0)$ by
$J^0 (t,x)$ for simplicity, we have that it fulfills
\begin{equation}\label{eq:HJB_J0}
\begin{array}{c}
\displaystyle J_t^0 + \frac{1}{2 \gamma} \langle (\sigma_F^*
\sigma_F)^{-1} \mu_F,\mu_F\rangle + \langle \bar b , J^0_x
\rangle
- \frac{1}{2} \gamma J_x^{0,*} B J_x^0 + \frac{1}{2} \textrm{tr}
\left(\Sigma^* \Sigma J_{xx}^0\right) = 0.
\end{array}
\end{equation}
Thus, subtracting Equation (\ref{eq:HJB_J0}) from Equation
(\ref{eq:HJB_Jpi*multi_general_B}) and using the fact that
$$
- \frac{1}{2} \gamma J_x^{*} B J_x + \frac{1}{2} \gamma J_x^{0,*}
B J_x^0 = - \frac{1}{2} \gamma v_x^* B v_x - \gamma J_x^{0,*} B
v_x
$$
we obtain the following PDE for the UIP $v$:
\begin{equation}\label{eq:HJB_UIPmulti_general}
\begin{array}{c}
\displaystyle v_t + \langle \bar b, v_x \rangle + \sup_{u \in
[0,\bar u]} \Big[ u v_z+ q L \Big] + \frac{1}{2} \mathrm{tr}
\left(\Sigma^* \Sigma v_{xx}\right) 
\displaystyle - \frac{1}{2} \gamma v_x^{*} B v_x - \gamma
J_x^{0,*} B v_x = 0,
\end{array}
\end{equation}
with the terminal condition
\begin{equation}\label{eq:terminalCondHJB_UIP}
v(T,x,z;q) = q \ \Phi(p(T,x),z).
\end{equation}
Notice that solving the HJB equation for the UIP $v(t,x,z;q)$
above requires the knowledge of $J^0$, which is the log-value
function of the optimal investment problem with no claim. This
phenomenon is due to the presence of the non-tradable factors $X$
in the dynamics of the forward contracts $F$ and it has been
observed in a somewhat different model in \cite{B1}, where the
non-tradable factors follow a pure jump dynamics.} 
\end{Rem}

\subsection{Existence and uniqueness results}\label{PDE-UIP}

In this section we show that the
log-value function $J$ is the unique continuous viscosity
solution with quadratic growth of equation
(\ref{eq:HJB_Jpi*multi_general_B}) with the terminal
condition (\ref{terminalJ}). From there, the UIP $v$ is easily
found via the equality (\ref{UIP-J}). We will work under the following standing assumption. Recall that the matrix $B$ has been defined in (\ref{eq:B}).
 
\begin{Ass}\label{MainAss}
The following properties hold: \begin{enumerate}
\item[(i)] $b$ is $C^1$, $B$ and $\langle  \Sigma^* \sigma_F ( \sigma_F ^* \sigma_F)^{-1}
 , \mu_F\rangle$ are $C^1$ and Lipschitz in $x$
uniformly in $t$.
\item[(ii)]  $\mu_F$ is bounded and $\langle (\sigma_F ^* \sigma_F)^{-1} \mu_F ,
\mu_F\rangle$ is $C^1$ and Lipschitz in $x$ uniformly in $t$.
\item[(iii)] $\sigma_F^* \sigma_F$ is bounded and uniformly elliptic, i.e.,
for some $\epsilon > 0$,
\begin{equation} \label{UnifEllF}
(\sigma_F ^* \sigma_F)(t,x) \geq \epsilon I_n, \qquad \mbox{for
all $(t,x) \in [0,T] \times \R^m$.}
\end{equation}
\item[(iv)] The matrix $B$ 
is positive semidefinite and there exists a constant
$\delta >0$ (uniform in $t,x$) such that
\begin{equation} \label{ineqB}
\frac{1}{\delta} | \xi |^2 \le \langle \xi , B \xi \rangle \le
\delta | \xi |^2
\end{equation}
for all vectors $\xi \in \rm{Im}(B)$, the image of $B$.
\end{enumerate}
\end{Ass}

Some comments on these hypotheses are in order. All the assumptions above, with the exception of $C^1$-regularities and boundedness of $\mu_F$ (linear growth is actually sufficient) have to be imposed in order to apply the method and the results established in \cite{DaLioLey}. In particular, condition (iv) on $B$ is related to the coercivity hypothesis in
Assumption A1 in \cite{DaLioLey}, which has a crucial role in the
proof of their comparison theorem. Such a property has to be verified on a case-by-case basis. Some examples where this assumption is verified are provided in Section \ref{ex}.

The additional $C^1$-regularity assumptions as well as the boundedness of $\mu_F$ allow us to adapt results from \cite{Pham02} to get the quadratic growth condition of the log-value function
$J^0$ of the investment problem with no claim. Furthermore, thanks to
Assumption \ref{AssC} on the structured contract, the latter property will be inherited by the log-value function, $J$, with the
claim.\\

We are now ready to state the main result of the paper.

\begin{Thm}\label{MainThm}
Let Assumptions \ref{AssC} and \ref{ass-coeff} hold. Under Assumption \ref{MainAss}, the log-value function $J$,
defined in (\ref{J}), is the unique continuous viscosity
solution with quadratic growth of the Cauchy problem
(\ref{eq:HJB_Jpi*multi_general_B}) with terminal condition
(\ref{terminalJ}).
\end{Thm}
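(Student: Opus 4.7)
The plan is to proceed in three steps: (a) establish a dynamic programming principle (DPP) for $V$ in (\ref{ValFunc}) and transfer the viscosity solution property from $V$ to $J$; (b) prove that $J$ has quadratic growth in $x$ and is bounded in $z$; (c) invoke the comparison principle of \cite{DaLioLey} to obtain uniqueness.

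For step (a), the exponential structure of the utility yields the factorisation $V(t,x,y,z;q) = -e^{-\gamma y}e^{-\gamma J(t,x,z;q)}$, so that it suffices to identify $V(t,x,0,z;q)$. Using the admissibility condition (\ref{adm}) (needed to handle the unboundedness of $\pi$), the boundedness of $L$ and $\Phi$ from Assumption \ref{AssC}, and the moment estimates (\ref{estimate}) and (\ref{expX}), one checks that $V$ is finite, continuous, and satisfies a DPP of the usual form $V(t,x,y,z;q)=\sup_{(u,\pi)\in\mathcal A_t}\mathbb E_{t,x,y,z}[V(\tau,X_\tau,Y^{u,\pi}_\tau,Z^u_\tau;q)]$ for every stopping time $\tau \in [t,T]$. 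Standard viscosity arguments (\`a la \cite{Pham02} or Fleming--Soner) then show that $V$ is a continuous viscosity solution of (\ref{eq:HJB_V}), and a direct chain-rule computation based on $V = -e^{-\gamma(y+J)}$ transfers the sub/supersolution inequalities to $J$, giving that $J$ solves (\ref{eq:HJB_Jpi*multi_general_B}) with terminal condition (\ref{terminalJ}) in the viscosity sense.

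For step (b), I treat first the no-claim log-value function $J^0(t,x):=J(t,x,z;0)$: the extra hypotheses in Assumption \ref{MainAss} (boundedness of $\mu_F$, uniform ellipticity of $\sigma_F^*\sigma_F$, and the $C^1$/Lipschitz regularity of $b$, $\bar b$, $B$ and the risk-premium terms) allow an adaptation of the arguments of \cite{Pham02}, combined with the exponential moment (\ref{expX}), to produce a quadratic growth estimate $|J^0(t,x)|\le C(1+|x|^2)$. Since $L$ and $\Phi$ are uniformly bounded by Assumption \ref{AssC}, one has the elementary two-sided bound
$$J^0(t,x) - q\bigl(T\|L\|_\infty + \|\Phi\|_\infty\bigr) \le J(t,x,z;q) \le J^0(t,x) + q\bigl(T\|L\|_\infty + \|\Phi\|_\infty\bigr),$$
so $J$ is bounded in $z$ and inherits the same quadratic growth in $x$ as $J^0$.

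For step (c), one checks that the PDE (\ref{eq:HJB_Jpi*multi_general_B}) fits the framework of second-order Bellman--Isaacs equations treated in \cite{DaLioLey}: Assumption \ref{MainAss}(iv) supplies precisely the coercivity on $B$ (on its image) that plays the role of their Assumption A1, and items (i)--(iii) provide the $C^1$/Lipschitz structure of $\bar b$, of the quadratic market-price-of-risk term and of $\Sigma^*\Sigma$ required by their comparison theorem. Applying their comparison result in the class of continuous viscosity solutions with quadratic growth yields uniqueness, which combined with steps (a)--(b) identifies $J$ as the unique such solution. I expect step (b) to be the main obstacle, because the quadratic growth bound for $J^0$ relies on a careful combination of the duality-type argument in \cite{Pham02} with the exponential estimate (\ref{expX}); the DPP and viscosity solution property in step (a) need care because of the unboundedness of $\pi$, but this is precisely the reason for the exponential admissibility condition.
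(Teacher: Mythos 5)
Your overall architecture coincides with the paper's: viscosity property of $V$ transferred to $J$ via the exponential factorisation, quadratic growth of $J$ reduced to that of $J^0$ by the boundedness of $L$ and $\Phi$ (your two-sided bound is exactly the paper's opening observation in Lemma \ref{quadratic_growth}, and the Pham-style argument for $J^0$ is the right one), and uniqueness via the Da Lio--Ley comparison theorem after checking that Assumption \ref{MainAss}(iv) supplies the coercivity of $B$ on its image. The only reduction you gloss over in step (c) is the Fenchel--Legendre rewriting of the quadratic term $-\tfrac{1}{2}\gamma J_x^* B J_x$ as an infimum over $\alpha\in\mathrm{Im}(B)$ (plus a time reversal), which is what literally puts the equation into the Bellman--Isaacs form of \cite{DaLioLey}; that is routine but should be stated.

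The genuine gap is in step (a). You assert that $V$ is \emph{continuous} and satisfies the classical dynamic programming principle, and you build the viscosity property on that. Neither claim is justified, and both are precisely the delicate points in this setting: the control $\pi$ ranges over an unbounded set, the state dynamics are degenerate in the $z$-direction, and a classical DPP requires measurable-selection arguments and an a priori continuity (or at least semicontinuity) of $V$ that you do not have. The paper avoids this entirely: Proposition \ref{visco} only establishes that $V$ is a \emph{possibly discontinuous} viscosity solution, obtained through the weak dynamic programming principle of Bouchard and Touzi \cite{BT11} (Corollary 5.6 there), whose hypotheses are checked via the estimate (\ref{estimate}), the nonpositivity and local boundedness of $V$, and a uniform-integrability argument for $\exp(-\gamma\int\langle\pi,dF/F\rangle)$ resting on the admissibility condition (\ref{adm}). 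Continuity of $J$ is then \emph{derived}, not assumed: one works with the u.s.c.\ and l.s.c.\ envelopes $J^*$, $J_*$, verifies the terminal inequalities $J^*(T,\cdot)\le \frac{\log\gamma}{\gamma}+q\Phi \le J_*(T,\cdot)$ (via Theorem 4.3.2 in \cite{PhamBook09}), and lets the comparison theorem force $J_*=J^*$. Your plan can be repaired by replacing the classical DPP with the weak one and by postponing continuity to the comparison step, but as written the continuity assertion is either unproven or circular.
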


Before proving the theorem, we give a preliminary 
result showing that the value function $V$ is a
(possibly discontinuous) viscosity solution of a
Hamilton-Jacobi-Bellman (HJB) equation in the interior of its
domain. Its proof is postponed to the Appendix.
\begin{Prop}\label{visco}
Let Assumptions \ref{AssC} and \ref{ass-coeff} hold. Under Assumption \ref{MainAss}, the value function $V$ in (\ref{ValFunc}) is a (possibly discontinuous) viscosity
solution of the HJB equation
\begin{equation} \label{eq:HJB_V_general}
V_t(t,x,y,z;q) + \sup_{(u,\pi) \in [0,\bar u]\times \mathbb R^n}
\mathcal L^{u,\pi} V(t, x,y,z ;q) = 0, \qquad (t, x,y,z) \in [0,T)
\times \mathbb R^m \times \mathbb R \times \mathbb R
\end{equation}
with terminal condition $V(T, x,y,z;q) = G(x,y,z )$,
where
$$ \mathcal L^{u,\pi} V
=
\left( \langle \pi,\mu_F \rangle + q L \right) V_y + \langle b,
V_x \rangle + u V_z + \frac{1}{2} |\pi^{*} \sigma_F^*|^2 V_{yy} +
\frac{1}{2} \mathrm{tr} \left( \Sigma \Sigma^* V_{xx} \right)
+  \pi^{*} \sigma_F^* \Sigma^* V_{xy}. $$
\end{Prop}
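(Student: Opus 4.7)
The plan is to follow the classical roadmap for showing that the value function of a stochastic control problem is a (possibly discontinuous) viscosity solution of the associated HJB equation, adapting the arguments to three specific features of our setting: the exponential utility, the unbounded $\pi$-control set $\mathbb R^n$, and the admissibility condition (\ref{adm}) expressed via exponential moments. The argument splits naturally into a dynamic programming principle (DPP), the two viscosity half-relations in the interior, and the verification of the terminal condition at $t = T$.

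First I would establish the DPP
\begin{equation*}
V(t,x,y,z;q) \;=\; \sup_{(u,\pi)\in\mathcal A_t}\mathbb E_{t,x,y,z}\!\left[V\bigl(\tau, X_\tau, Y_\tau^{u,\pi}, Z_\tau^u; q\bigr)\right]
\end{equation*}
for every stopping time $\tau \in [t,T]$. Thanks to Assumptions \ref{AssC} and \ref{ass-coeff} combined with the exponential-moment admissibility (\ref{adm}) and the estimates (\ref{estimate})--(\ref{expX}), the terminal reward $-\gamma^{-1}\exp(-\gamma(Y_T^{u,\pi} + q\Phi))$ is uniformly integrable along admissible families and uniformly bounded above by zero. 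These ingredients allow one to invoke either a measurable-selection DPP in the style of Krylov or, if it is more convenient, the weak form of Bouchard and Touzi. The stability of $\mathcal A_t$ under concatenation of controls at a stopping time is immediate from the bound $u\in[0,\bar u]$ and from (\ref{adm}) for $\pi$, and the filtration has been assumed right-continuous in Section \ref{sec:theProblem}.

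With DPP available, the two viscosity half-relations in the interior are obtained by the standard contradiction argument. For the subsolution property I take a smooth test function $\phi$ such that $V^\ast-\phi$ attains a strict local maximum, with value zero, at an interior point $(t_0,x_0,y_0,z_0)\in[0,T)\times\mathbb R^m\times\mathbb R\times\mathbb R$, assume $\phi_t+\sup_{(u,\pi)}\mathcal L^{u,\pi}\phi\le-2\eta<0$ there, choose $(t_k,x_k,y_k,z_k)\to(t_0,x_0,y_0,z_0)$ with $V(t_k,\cdot)\to V^\ast(t_0,\cdot)$, apply It\^o's formula to $\phi$ along $(X,Y^{u,\pi},Z^u)$ up to the exit from a small ball, and combine $V\le V^\ast\le\phi$ near the touching point with the DPP to reach a contradiction. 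The crucial uniformity in $(u,\pi)$ of the It\^o estimate is ensured by the fact that, thanks to the exponential-utility factorisation $V(t,x,y,z;q)=-e^{-\gamma y}\Theta(t,x,z;q)$ with $\Theta>0$, the Hamiltonian is strictly concave and coercive in $\pi$ (the coefficient of $|\pi^{*}\sigma_F^{*}|^2$ inherits the sign of $\phi_{yy}$), so that the supremum over $(u,\pi)$ is effectively attained on a bounded set determined by $\hat\pi^q$ in (\ref{optimalpi}). The supersolution property is established symmetrically at a local minimum of $V_\ast-\phi$, by picking an $\varepsilon$-optimal constant control in that bounded range. Finally, the terminal condition $V(T,x,y,z;q)=G(x,y,z)$ follows from the continuity and boundedness of $\Phi$, $L$ and $p$ (Assumptions \ref{AssC} and \ref{ass-coeff}(i)) together with the moment bounds (\ref{estimate})--(\ref{expX}) via a dominated-convergence argument applied to the reward along any admissible $(u,\pi)$.

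The main obstacle is the noncompactness of the $\pi$-control set coupled with the exponential-moment admissibility (\ref{adm}): standard viscosity-solution proofs presume a compact action space, so both the uniformity in $(u,\pi)$ of the subsolution argument and the existence of $\varepsilon$-optimal selectors for the supersolution argument have to be justified through the $|\pi|^2$-coercivity afforded by the exponential-utility structure. A technically cleaner alternative, should the direct argument prove awkward, is to first truncate to $|\pi|\le R$, prove the HJB for the truncated value function $V^R$ with the usual compact-action theory, and then let $R\to\infty$, relying on $V^R\nearrow V$ pointwise and on standard stability of viscosity solutions under monotone convergence of Hamiltonians.
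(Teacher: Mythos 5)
Your overall strategy --- prove a dynamic programming principle and then run the classical It\^o/contradiction argument --- is a legitimate route in principle, but it diverges from the paper's proof and leaves two genuine gaps. The paper does not prove a strong DPP and argue by hand; it verifies the hypotheses of Corollary 5.6 of \cite{BT11} (the a-priori estimate, which follows from (\ref{estimate}); local boundedness of $V$, obtained by testing the zero control; and lower semicontinuity in $(t,x,y,z)$ of the objective function for each fixed admissible control), and lets that corollary deliver the viscosity property directly. This matters because the strong DPP you display, with $V(\tau,\cdot)$ inside the expectation, cannot simply be ``established'': $V$ is a priori not known to be measurable (the proposition explicitly allows it to be discontinuous), and circumventing exactly this is the purpose of the weak DPP. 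Writing the strong DPP as your starting point, with Krylov-style measurable selection as one option, imports a substantial unproved step.

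The more serious gap is that you assert, rather than prove, the uniform integrability of the reward along admissible families. This is the technical heart of the paper's proof: one must show that $\sup_{t\in[0,T]}\mathbb E\bigl[\exp\bigl(-2\gamma\int_t^T\langle\pi_s,dF_s/F_s\rangle\bigr)\bigr]<\infty$ for each admissible $\pi$, which is done via an explicit change of measure whose density is controlled using the boundedness of $\sigma_F^*\sigma_F$ (Assumption \ref{MainAss}(iii)), the exponential-moment admissibility condition (\ref{adm}), and the Gaussian-type bound (\ref{expX}) on $X$. None of this follows from (\ref{estimate}), which only gives polynomial moments. Relatedly, your coercivity argument for the unbounded $\pi$-set is shaky: the coefficient of $|\pi^{*}\sigma_F^{*}|^2$ in $\mathcal L^{u,\pi}\phi$ is $\tfrac12\phi_{yy}$ for an \emph{arbitrary} test function $\phi$, which need not be negative at the touching point, so the Hamiltonian applied to $\phi$ need not be coercive in $\pi$; and the fallback of truncating to $|\pi|\le R$ and invoking ``stability under monotone convergence of Hamiltonians'' requires local uniform convergence, which fails when the limiting supremum is $+\infty$. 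The paper sidesteps all of this by working with the weak DPP, where the unboundedness of the control set is absorbed into the integrability checks above, plus a short remark that its smaller admissible class yields the same value function as the class in \cite{BT11} by truncation of controls.
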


At this point we are in position to prove Theorem \ref{MainThm}.

\begin{proof}[Proof of Theorem \ref{MainThm}.]
We consider the existence first. This is an easy consequence of
Proposition \ref{visco} above, which gives that the value
function $V$ is a viscosity solution of equation
(\ref{eq:HJB_V_general}). It then suffices to use the definition
of viscosity solution to check that the log-value function $J$
defined in (\ref{J})
is a (possibly discontinuous) viscosity solution of the PDE
(\ref{eq:HJB_Jpi*multi_general_B}).

To complete the proof, it remains to show that $J$ is unique in
the class of all continuous viscosity solutions with quadratic
growth for the Cauchy problem (\ref{eq:HJB_Jpi*multi_general_B}) and (\ref{terminalJ}). The
main idea for uniqueness is to use the comparison theorem in
\cite[Th. 2.1]{DaLioLey}. For reader's convenience, we split
the rest of the proof into two steps.\medskip

(i) \emph{Reduction to Da Lio and Ley \cite{DaLioLey} setting.}
First, we use a Fenchel-Legendre transform to express the
quadratic term in our pricing PDE (\ref{eq:HJB_Jpi*multi_general_B}) into an infimum over the image
of $B$ of a suitable function. More precisely, we apply a
classical result in convex analysis (e.g. \cite[Ch.III, Sect.
12]{Rock}) to get
\begin{equation}
F(w) := - \frac{1}{2} \langle w, B w \rangle = \inf_{\alpha \in
\rm{Im}(B)} \{ - \tilde F (\alpha) - \langle \alpha,w \rangle \} =
\inf_{\alpha \in \mathbb R^m} \{ - \tilde F (\alpha) - \langle \alpha,w
\rangle \}, \label{F}
\end{equation}
for all vectors $w \in \mathbb R^m$, where $\tilde F$ is the conjugate
of $F$, which is also given by 
\[ \tilde F (\alpha) = - \frac{1}{2} \langle
\alpha, B^{-1} \alpha \rangle , \]
when $\alpha \in \rm{Im}(B)$ and
$-\infty$ otherwise. Notice that the first infimum in (\ref{F}) is computed
over the image of $B$ since the matrix $B$ is not necessarily
invertible in our framework.
Using (\ref{F}), we can rewrite equation
(\ref{eq:HJB_Jpi*multi_general_B}) as
\begin{equation}\label{pdeJF}
\begin{array}{c}
\displaystyle J_t + \frac{1}{2 \gamma} \langle (\sigma_F^*
\sigma_F)^{-1} \mu_F,\mu_F\rangle + \langle \bar b, J_x \rangle %
\\
+ \sup_{u \in [0,\bar u]} \Big[ u J_z + qL \Big]
+\gamma F(J_x) + \frac{1}{2} \textrm{tr} \left(\Sigma^* \Sigma
J_{xx}\right) = 0,
\end{array}
\end{equation}
with $\bar b$ as in (\ref{bbar}) and with terminal condition $J(T,x,z;q) = \frac{\log \gamma}{\gamma}
+ q\Phi(p(T,x),z)$.
In order to reduce this PDE to the one in \cite[Eq.
(1.1)]{DaLioLey}, we apply the time reversal
transformation $\widehat J (t,x,z;q) := J(T-t,x,z;q)$, turning the PDE (\ref{pdeJF}) into the following
\begin{equation} \label{eq:HJB_UIP_DLL}
\begin{array}{c}
\displaystyle -\widehat J_t + \frac{1}{2 \gamma} \langle
(\sigma_F^* \sigma_F)^{-1} \mu_F,\mu_F\rangle + \langle \bar b,
\widehat J_x \rangle 
+ \sup_{u \in [0,\bar u]} \Big[ u \widehat J_z + qL \Big]
+\gamma F(\widehat J_x) + \frac{1}{2} \textrm{tr} \left(\Sigma^*
\Sigma \widehat J_{xx}\right) = 0,
\end{array}
\end{equation}
with the initial condition
\begin{equation} \label{ic:DLL}
\widehat J(0,x,z;q) = \frac{\log \gamma}{\gamma} +
q\Phi(p(T,x),z).
\end{equation}
Notice that this Cauchy problem is a particular case of the one
studied in \cite{DaLioLey} since our Assumptions \ref{AssC}, \ref{ass-coeff} and \ref{MainAss} imply Assumptions (A1), (A2), (A3) in
\cite{DaLioLey}. In particular, Assumption
\ref{MainAss}(iv) implies the same property for $B^{-1}$, giving
(A1)(iii) in \cite{DaLioLey}. Indeed on the image of $B$,
$B^{1/2}$ as well its inverse $B^{-1/2}$ are well-defined. Since
$B^{-1/2}: \textrm{Im}(B) \to \textrm{Im}(B)$, we have that,
e.g., the LHS in (\ref{ineqB}) implies $\delta^{-1}\vert B^{-1/2}
y\vert ^2 \leq \langle B^{-1/2} y, BB^{-1/2}y\rangle$ for all
$y\in \textrm{Im}(B)$, leading to $\langle y , B^{-1} y \rangle
\leq \delta \vert y \vert ^2$ for all $y\in \textrm{Im}(B)$. The
other inequality is obtained in a similar way.
\medskip

(ii) \emph{Uniqueness.} We proceed by contradiction. Assume that there exists another continuous viscosity solution
with quadratic growth $\tilde J$ of the Cauchy problem (\ref{eq:HJB_UIP_DLL}) with
terminal condition (\ref{ic:DLL}).
Then, by calling $J^*$ and $\tilde J^*$ their u.s.c. envelopes
and $J_*$ and $\tilde J_*$ their l.s.c. envelopes, we have, by
definition of viscosity solution, that $J^*$, $\tilde J^*$ are
u.s.c. viscosity subsolutions and $J_*$, $\tilde J_*$ are l.s.c.
viscosity supersolutions of equation (\ref{eq:HJB_UIP_DLL}),
obviously with $\tilde J_* \le \tilde J^*$. We also have
$J_*(T,x,z;q) \leq \frac{\log \gamma}{\gamma}+ q \Phi(p(T,x),z)
\leq J^*(T,x,z;q)$, by definition of upper and lower envelopes.
We now want to prove that 
\begin{equation} \label{ineqT} J^*(T,x,z;q) \leq \frac{\log
\gamma}{\gamma}+q \Phi(p(T,x),z) \le J_* (T,x,z;q) , \end{equation}
for all $q \geq 0$, $x \in \mathbb R^m, z\in [0,\bar u T]$. To prove the inequalities (\ref{ineqT}) it suffices to apply Theorem 4.3.2 and subsequent Remark 4.3.5 in \cite{PhamBook09}.\footnote{Notice that in our case the function $G$ appearing in the statement of Theorem 4.3.2 in \cite{PhamBook09} can be chosen to be any positive number.}  

Moreover it can be proved that
$J(t,x,z;q)$ has quadratic growth in $(x,z)$, uniformly in $t$, for all $q\ge 0$ (ref. Lemma
\ref{quadratic_growth} in the Appendix). Then, by the comparison
theorem \cite[Theorem 2.1]{DaLioLey}, we have that
$$ J_* \leq J^* \leq \tilde J_* \leq \tilde J^* \leq J_* $$
on $[0,T] \times \R^m \times \R$. This implies that $J_* = J^* =
J = \tilde J$, and that $J$ is continuous. The proof is therefore
complete.
\end{proof}

As a consequence of the result in Theorem \ref{MainThm}, we have
a good candidate for the optimal hedging strategy,
which is given by
\begin{equation} \label{optimalH}
\hat h^q := \hat \pi^q - \hat \pi^0 = - (\sigma_F^* \sigma_F)^{-1} \sigma_F^* \Sigma v_x,
\end{equation}
where $v_x$ is the gradient with respect to the factor
variables, when it exists, of the UIP. Indeed, the candidate
optimal strategy with or without the structured product in the
portfolio is given by equation (\ref{optimalpi}), where $J =
J(t,x,z;q)$ with $q > 0$ or $q = 0$ in the two cases,
respectively. Thus the optimal hedging strategy $\hat h^q$ is given by
\begin{eqnarray*}
\hat \pi^q - \pi^0 &=& \hat \pi (t,x,z;q) - \hat \pi(t,x,z;0) \\
& =& - (\sigma_F^* \sigma_F)^{-1} \sigma_F^* \Sigma
(J_x(t,x,z;q) - J_x(t,x,z;0))\\
& =&  - ((\sigma_F^* \sigma_F)^{-1}
\sigma_F^* \Sigma v_x )(t,x,z;q) ,
\end{eqnarray*}
in analogy with \cite{B2,B1}. Concerning the optimal exercise policy $\hat u$ of the structured contract, a candidate in feedback form is given by solving the maximization problem
\[ \max_{u \in [0,\bar u]} \left[ uv_z (t,x,z;q) + qL (p,z,u)\right].\]
For an explicit formula, consider the case $L(p,z,u)=u \ell (p,z)$ with $\ell$ bounded. In this case, it is easy to see that
\[ \hat u (t,x,z;q) = \bar u \mathbf 1_{\left[v_z (t,x,z;q) > q \ell (p,z)\right]}. \]
Even though working with viscosity solutions does not allow to justify rigorously the optimality of such controls, we observe that they are consistent with the optimal policies that have been obtained in the past literature for more specific models (see e.g. \cite{BCV13,BLN11}).

\begin{Rem}\label{Lipschitz} \rm{Note that we have worked on the log-value function's PDE
(\ref{eq:HJB_Jpi*multi_general_B}) instead of on the PDE for the
price $v$ (cf. equation (\ref{eq:HJB_UIPmulti_general})). The reason for doing so is that the latter is more delicate to handle due to the fact that it
contains the first derivative $J_x^0$ of the log-value function
with no claim. Applying Da Lio and Ley results directly to
equation (\ref{eq:HJB_UIPmulti_general}) would require a
Lipschitz continuity for $J_x^0$ uniformly in $t$, which is
difficult to have in general. Nonetheless, when this condition is
satisfied as in Cartea-Villaplana (see Subsection \ref{CV}) and in the linear dynamics
model in Example \ref{exCL}, the same arguments go through with less assumptions than in Theorem \ref{MainThm}. Indeed, the boundedness of the payoffs $L$ and $\Phi$ implies that the UIP $v$ is bounded and so it has quadratic growth. Therefore, Lemma \ref{quadratic_growth} is not needed anymore and neither are all the $C^1$-regularities and the boundedness of $\mu_F$ as in Assumption \ref{MainAss}. Under the remaining assumptions and when $\mu_F$ has linear growth in $x$ uniformly in $t$ (replacing its boundedness) we can prove that $v$ is the unique continuous viscosity
solution with quadratic growth to equation
(\ref{eq:HJB_UIPmulti_general}) with terminal condition
(\ref{eq:terminalCondHJB_UIP}). The proof is analogous to that of Theorem
\ref{MainThm}, it is therefore omitted.}
\end{Rem}

\begin{Rem}[Complete market case]  \rm{When the market is complete, i.e. $d = n$ and
$\sigma_F$ has full rank, we have $B=0$ so that Assumption \ref{MainAss}(iv) is trivially satisfied and $J^0_x$ does not appear in the PDE for $v$ anymore. In this case, we can work directly with the PDE for $v$ along the same lines as in the previous Remark \ref{Lipschitz}. Therefore, under Assumptions \ref{AssC}, \ref{ass-coeff} and \ref{MainAss}(i)-(ii)-(iii), one can show that $v$ is the unique
continuous viscosity solution with quadratic growth of the HJB
equation
\begin{equation} \label{HJBcomp}
\begin{array}{c}
\displaystyle v_t + \langle b-\Sigma^* (\sigma_F ^* )^{-1} \mu_F , v_x \rangle + \frac{1}{2}
\mathrm{tr} \left(\Sigma^* \Sigma v_{xx}\right) + \sup_{u \in
[0,\bar u]} \Big[ u v_z+ q L \Big] = 0,
\end{array}
\end{equation}
with terminal condition
\begin{equation} \label{HJBcompterm}
v(T,x,z;q) = q \Phi(p(T,x),z).
\end{equation}
Moreover, one can weaken the boundedness of $\mu_F$ and require only linear growth in $x$ uniformly in $t$. This result extends to our setting previous ones in \cite{BCV13,BLN11,ChenForsyth06,Felix12,TDR09},
which were obtained for particular types of structured contracts, e.g., swings and virtual storages, and without trading in forward contracts.} 
\end{Rem}

\section{Examples}\label{ex}
\subsection{A class of models with two assets and constant correlation}\label{sec:exCL}
In this section we focus on the following incomplete
market model:
\begin{equation}\label{}
\left\{
\begin{array}{rcl}
\displaystyle \frac{dF_t}{F_t} &=& \mu_F (t,
X_t) dt + \bar \sigma_F (t,
X_t) dW^1 _t , \\
\\
dX_t &=& b (t,
X_t) dt + \sigma (t,
X_t) \left( \rho dW^1 _t + \sqrt{1- \rho^2}dW^2_t \right),
\end{array}
\right.
\end{equation}
where $W=(W^1,W^2)$ is a bidimensional Brownian motion and $\rho \in (-1,1)$. This is clearly a particular case of the general model in the previous section with $\sigma^* _F (t,x) = (\bar \sigma_F (t,x), 0)$, $\Sigma^* (t,x) = \sigma(t,x) (\rho , \sqrt{1-\rho^2})$ and $P_t =p(t,X_t)$ for some continuous function $p(t,x)$. This model is a generalization of the usual Black-Scholes model with basis risk (see \cite{Davis06,Henderson02,Monoyios04} among many others), with the additional feature that the non traded asset or factor $X$ can appear in the coefficients of the traded asset $F$. 

We suppose that Assumptions \ref{AssC} and \ref{ass-coeff} are in force. Concerning Assumption \ref{MainAss}, we are going to specialize it to the present setting as follows. Observe first that the quantity $\langle \Sigma^*  \sigma_F (\sigma_F ^* \sigma_F)^{-1} , \mu_F\rangle$ appearing in Assumption \ref{MainAss}(i) reads as
\[ 
\langle \Sigma^*  \sigma_F (\sigma_F ^* \sigma_F)^{-1} , \mu_F\rangle (t,x)= \rho  \mu_F(t,x) \frac{ \sigma(t,x)}{\bar \sigma_F(t,x)}
\]
while the scalar product $\langle (\sigma_F ^* \sigma_F)^{-1} \mu_F ,
\mu_F\rangle$ in Assumption \ref{MainAss}(ii) is
\[ 
\langle (\sigma_F ^* \sigma_F)^{-1} \mu_F ,
\mu_F\rangle (t,x) =  \frac{\mu_F^2(t,x)}{\bar \sigma_F^2(t,x)}.
\]
and $(\sigma_F^* \sigma_F)(t,x)$ in Assumption \ref{MainAss}(iii) corresponds to $(\sigma_F^* \sigma_F)(t,x)= \bar \sigma_F^2(t,x)$. Finally, we have $B(t,x)=(1- \rho^2)\sigma^2(t,x)$.
Hence, Assumption \ref{MainAss} is guaranteed by the conditions listed just below and the general results in Theorem \ref{MainThm} can be safely applied.
\begin{Ass} \label{AssConstCorr}
Let the following properties hold: 
\begin{itemize}
\item[(i)] $b\in C^1$, $\sigma \in C^1$;
\item[(ii)] $\mu_F$ is bounded;
\item[(iii)] $\sigma$ and $\bar \sigma_F$ are bounded and bounded away from zero;
\item[(iv)] $\frac{\mu_F}{\bar \sigma_F} \in C^1$ and it is Lipschitz in $x$ uniformly in $t$.
\end{itemize}
\end{Ass}
In this more specific setting, we can obtain more information on the structure of the value function of the buyer of $q$ units of the structured product provided we have the following
\begin{Ass}\label{AssLip}
Let the log-value function $J^0 _x$ be Lipschitz in $x$ uniformly in $t$.
\end{Ass}
Under this assumption, we do not need to suppose that $\mu_F$ is bounded as in \ref{AssConstCorr}(ii) above. Indeed the considerations in Remark \ref{Lipschitz} apply, so that in particular $\mu_F$ can be a linear function of $x$ as in Example \ref{exCL} below.

Let $C_T^u$ be the payoff of a given structured contract as in (\ref{eq:C_T def}). Inspired by the results in Oberman and Zariphopoulou \cite{OZ03}, which in turn extend El Karoui and Rouge \cite{ElKR} to American
options, we obtain a representation of the UIP of the structured
product $C_T^u$ as the value function of an auxiliary optimization
problem with respect to the control $u$ only, under a suitable
equivalent martingale measure involving the derivative $J_x ^0$
of the log-value function of the problem with no claim, and where
$\gamma$ is replaced by a modified risk aversion $\widetilde
\gamma=\gamma (1-\rho^2)$.\medskip

Let us consider the measure $\mathbb Q ^0$ defined as
 \begin{equation}\label{Q0}
\frac{d\mathbb Q^0}{d\mathbb P} \Big \vert _{\mathcal F_t} : =
D^0 _t : = \exp \left( -\int_0 ^t \theta^* _u dW_u -\frac{1}{2}
\int_0 ^t |\theta_u | ^2 du \right), \quad t \in [0,T],
 \end{equation}
where $W=(W^1 , W^2)^*$ and $\theta$ is given by
\begin{equation}
\theta_t = (\theta_t ^1 , \theta_t ^2)^* = \left(
\frac{\mu_F}{\bar \sigma_F} , \; \gamma \sqrt{1-\rho^2} \sigma J_x
^0 \right)^* (t,X_t).
\end{equation}
Notice that the stochastic exponential is well defined, since
$X$ has continuous paths and $\mu_F / \bar \sigma_F$ is
continuous, so that the stochastic integral $\int_0 ^t \theta_u
^1 dW_u ^1$ is well-defined for every $t$. Moreover, the second
integral $\int_0 ^t \theta^2 _u dW_u ^2$ is also
well-defined thanks to the continuity of $\sigma (t,X_t)$ and
the linear growth of $J_x ^0$ (cf. Lemma
\ref{quadratic_growth}).

Finally, in order for the equation (\ref{Q0}) to define a
probability measure, we need to impose that $\mathbb E[D_T ^0]
=1$. This equality holds true when, for instance, $J_x^0$ is bounded, so that in particular Novikov's criterion applies. More generally, one could use the deterministic criteria proposed in \cite{MU} (e.g. Theorem 2.1 therein).

\begin{Rem}
{\rm In the case when the coefficients of $F$ do not depend on
the state variable $X$, when, e.g. both follows geometric Brownian motions with constant correlation, we have that $J^0 _x \equiv 0$, and
$\mathbb Q^0$ coincides with the minimal entropy martingale
measure. Therefore the measure $\mathbb Q^0$ can be viewed as a
perturbation of the minimal entropy martingale measure (see \cite{frittelli00}) where the
correction involves the log-value function $J^0$ of the optimal
pure investment problem.}
\end{Rem}

In what follows we will need the following preliminary lemma, stating the
dynamics of the spot price under the martingale measure $\mathbb
Q^0$. Its proof is based on a standard application of Girsanov's
theorem, and it is therefore omitted.
\begin{Lem}
Assume $\mathbb E[D_T ^0] =1$. Then the dynamics of $X$ under $\mathbb Q^0$ is given by
\begin{equation}
d X_t = \tilde b (t,X_t) dt + \sigma (t,X_t) d W^0 _t,
\end{equation}
where
\[ \tilde b (t,X_t) := \left( b - \rho \sigma \frac{\mu_F}{\bar \sigma_F} -
\tilde \gamma \sigma ^2 J_x ^0 \right) (t, X_t)\]
and
$$
d W^0_t := \rho dW^1 _t + \sqrt{1- \rho^2}dW^2_t  + \left(\rho \frac{\mu_F}{\bar \sigma_F} + \tilde
\gamma \sigma J_x ^0 \right) (t,X_t) dt
$$
defines a $\mathbb Q^0$-Brownian motion and $\tilde \gamma =
\gamma (1 - \rho^2)$.
\end{Lem}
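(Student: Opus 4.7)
The plan is to apply Girsanov's theorem directly on the two-dimensional Brownian motion $W=(W^1,W^2)^*$ and then reorganize the resulting drift terms. Under the standing hypothesis $\mathbb{E}[D^0_T]=1$, the process $D^0$ is a true $\mathbb{P}$-martingale, and hence
\[
\widetilde W_t := W_t + \int_0^t \theta_u\, du
\]
is a two-dimensional Brownian motion under $\mathbb{Q}^0$. Componentwise,
\[
d\widetilde W^1_t = dW^1_t + \frac{\mu_F}{\bar\sigma_F}(t,X_t)\, dt, \qquad d\widetilde W^2_t = dW^2_t + \gamma\sqrt{1-\rho^2}\,\sigma(t,X_t)\, J^0_x(t,X_t)\, dt.
\]

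Next I would introduce the single process $W^0_t := \rho\,\widetilde W^1_t + \sqrt{1-\rho^2}\,\widetilde W^2_t$ and check that it is a $\mathbb{Q}^0$-Brownian motion. This is immediate either by Lévy's characterization (it is a continuous $\mathbb{Q}^0$-local martingale with quadratic variation $\rho^2\, t + (1-\rho^2)\, t = t$), or by the general fact that an orthogonal linear combination of components of a multidimensional Brownian motion with unit-norm coefficients is again a Brownian motion. By construction
\[
dW^0_t = \rho\, dW^1_t + \sqrt{1-\rho^2}\, dW^2_t + \left(\rho\,\frac{\mu_F}{\bar\sigma_F} + \sqrt{1-\rho^2}\cdot \gamma\sqrt{1-\rho^2}\,\sigma\, J^0_x\right)(t,X_t)\, dt,
\]
which, using $\tilde\gamma = \gamma(1-\rho^2)$, is exactly the identity announced in the lemma.

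Finally, I would substitute this expression into the original $\mathbb{P}$-dynamics
\[
dX_t = b(t,X_t)\, dt + \sigma(t,X_t)\bigl(\rho\, dW^1_t + \sqrt{1-\rho^2}\, dW^2_t\bigr)
\]
by solving for $\rho\, dW^1_t + \sqrt{1-\rho^2}\, dW^2_t$ in terms of $dW^0_t$ and the extra drift. Collecting terms yields
\[
dX_t = \left(b - \rho\,\sigma\,\frac{\mu_F}{\bar\sigma_F} - \tilde\gamma\,\sigma^2\, J^0_x\right)(t,X_t)\, dt + \sigma(t,X_t)\, dW^0_t = \tilde b(t,X_t)\, dt + \sigma(t,X_t)\, dW^0_t,
\]
which is the claim.

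The only real concern in this argument is the standing assumption $\mathbb{E}[D^0_T]=1$, which is what guarantees Girsanov is applicable; everything else is a direct substitution, so I would expect no genuine obstacle beyond bookkeeping. One could briefly remark that the stochastic integrals entering $D^0$ are well-defined by continuity of $\mu_F/\bar\sigma_F$ and $\sigma$, together with the quadratic-growth estimate on $J^0_x$ from Lemma \ref{quadratic_growth}, as already noted in the paragraph preceding the lemma statement.
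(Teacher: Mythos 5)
Your proof is correct and follows exactly the route the paper intends: the paper omits the proof, describing it as ``a standard application of Girsanov's theorem,'' and your argument (Girsanov on the two-dimensional $W$, the L\'evy characterization of $W^0$ as a unit-variance combination of the shifted components, and direct substitution into the dynamics of $X$) is precisely that standard application. The bookkeeping of the drift terms, including the identification $\sqrt{1-\rho^2}\cdot\gamma\sqrt{1-\rho^2}=\tilde\gamma$, checks out.
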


The following proposition extends to our setting the
characterisation in Oberman and Zariphopoulou \cite[Prop.
10]{OZ03}. 
\begin{Prop}
Let the standing Assumptions \ref{AssC}, \ref{ass-coeff}, \ref{AssConstCorr}(i)-(iii)-(iv) and \ref{AssLip} hold. Then the UIP $v = v(t,x,z;q)$ satisfies
\begin{equation}\label{eq:v_via_MEMM}
v(t,x,z;q) = \sup_{u \in \mathcal U_t} \left( - \frac{1}{\tilde
\gamma} \ln \mathbb E^0 _{t,x,z} \left[ e^{- \tilde \gamma q
C_{t,T}^u} \right] \right),
\end{equation}
where $\mathbb E^0 _{t,x,z}$ denotes the conditional expectation
under $\mathbb Q^0$.
\end{Prop}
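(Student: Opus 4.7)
The plan is to show that the right-hand side of (\ref{eq:v_via_MEMM}) defines a continuous, bounded (hence of quadratic growth) viscosity solution of the same Cauchy problem that characterizes the UIP $v$, and then invoke uniqueness to conclude equality. Under Assumption \ref{AssLip}, Remark \ref{Lipschitz} applies to the present setting and guarantees that equation (\ref{eq:HJB_UIPmulti_general}) admits at most one continuous viscosity solution with quadratic growth, so this reduction is sufficient.

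First, I would specialize the PDE (\ref{eq:HJB_UIPmulti_general}) to the two-asset model. Using $\Sigma^*\Sigma=\sigma^2$, $B=(1-\rho^2)\sigma^2$ and $\bar b = b-\rho\sigma\mu_F/\bar\sigma_F$, and recognizing the $\mathbb Q^0$-drift $\tilde b = b-\rho\sigma\mu_F/\bar\sigma_F - \tilde\gamma\sigma^2 J^0_x$ from the preceding Lemma, the PDE reduces to
\[
v_t + \tilde b\, v_x + \tfrac{1}{2}\sigma^2 v_{xx} - \tfrac{1}{2}\tilde\gamma\sigma^2 v_x^2 + \sup_{u\in[0,\bar u]}\bigl[u v_z + qL\bigr] = 0,
\]
with terminal condition $v(T,x,z;q)=q\Phi(p(T,x),z)$. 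This is precisely the HJB equation one expects for the exponential control problem at risk aversion $\tilde\gamma$ of the bounded payoff $qC^u_{t,T}$ along the $\mathbb Q^0$-dynamics of $X$.

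Next, set
\[
w(t,x,z;q) := \inf_{u\in\mathcal U_t}\mathbb E^0_{t,x,z}\bigl[e^{-\tilde\gamma q C^u_{t,T}}\bigr], \qquad \tilde v(t,x,z;q) := -\tilde\gamma^{-1}\ln w(t,x,z;q),
\]
so that the right-hand side of (\ref{eq:v_via_MEMM}) is exactly $\tilde v$. By Assumption \ref{AssC} and the fact that $u\in[0,\bar u]$, the payoff $C^u_{t,T}$ is uniformly bounded, so $w$ is bounded away from $0$ and $\infty$ and $\tilde v$ is bounded. A standard dynamic programming argument applied to the control problem defining $w$ (simpler than the one for $V$ sketched in the Appendix since the portfolio control $\pi$ is now absent) shows that $w$ is a continuous viscosity solution of
\[
w_t + \tilde b\, w_x + \tfrac{1}{2}\sigma^2 w_{xx} + \inf_{u\in[0,\bar u]}\bigl[u w_z - \tilde\gamma q L w\bigr] = 0,
\]
with terminal datum $w(T,\cdot,\cdot;q)=\exp(-\tilde\gamma q \Phi(p(T,\cdot),\cdot))$. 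A chain-rule computation of the partial derivatives of $-\tilde\gamma^{-1}\ln w$ then transforms this equation into exactly the PDE for $v$ obtained above, with matching terminal condition. By the uniqueness in Remark \ref{Lipschitz}, it follows that $v=\tilde v$, which is (\ref{eq:v_via_MEMM}).

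The main technical obstacle is the rigorous viscosity derivation of the HJB equation for $w$. This relies on: (i) $\mathbb Q^0$ being a genuine probability measure, as assumed in the paragraph following (\ref{Q0}); (ii) the existence of a strong solution for $X$ under $\mathbb Q^0$, since $\tilde b$ inherits sufficient regularity from the Lipschitz continuity of $b$, of $\mu_F/\bar\sigma_F$ and of $J^0_x$ (Assumptions \ref{ass-coeff}, \ref{AssConstCorr}(iv) and \ref{AssLip}) together with the boundedness of $\sigma$ and $\bar\sigma_F$ (Assumption \ref{AssConstCorr}(iii)); and (iii) the boundedness of the exponential payoff, which makes both the dynamic programming principle and the viscosity characterization straightforward via the same arguments used for $V$ in the Appendix.
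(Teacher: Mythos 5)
Your proposal is correct and follows essentially the same route as the paper: define $w$ as the value of the auxiliary control problem under $\mathbb Q^0$, identify its viscosity HJB equation, apply the logarithmic transformation to recover the specialized form of equation (\ref{eq:HJB_UIPmulti_general}) with terminal condition (\ref{eq:terminalCondHJB_UIP}), and conclude by the uniqueness supplied through Remark \ref{Lipschitz} together with the boundedness (hence quadratic growth) of the candidate coming from Assumption \ref{AssC}. The only differences are cosmetic (your $w$ is the negative of the paper's, turning their supremum into your infimum) plus some added detail on why the $\mathbb Q^0$-dynamics is well posed, which the paper handles in the preceding Lemma and the discussion following (\ref{Q0}).
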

\begin{proof}
We prove the result by showing that the candidate function
$$
\tilde v = \tilde v(t,x,z;q) := \sup_{u \in \mathcal U_t} \left(
- \frac{1}{\tilde \gamma} \ln \mathbb E^0 _{t,x,z} \left[ e^{-
\tilde \gamma q C_{t,T}^u} \right] \right)
$$
satisfies equation (\ref{eq:HJB_UIPmulti_general}) with terminal
condition (\ref{eq:terminalCondHJB_UIP}) and we conclude using
the comparison theorem in Da Lio and Ley \cite[Th.
2.1]{DaLioLey}. To this end, write $\tilde v$ as
\begin{eqnarray}
\tilde v(t,x,z;q)
& = & - \frac{1}{\tilde \gamma} \ln ( - w(t,x,z;q)),
\label{def-w}
\end{eqnarray}
with
$$
w(t,x,z;q) : = \sup_{u \in \mathcal U_t} \mathbb E^0_{t,x,z}
\left[ - e^{- \tilde \gamma q C_{t,T}^u }\right].
$$
The value function $w$ above solves the following Cauchy problem
in a viscosity sense
\begin{equation*}
\left\{
\begin{array}{l}
\displaystyle w_t(t,x,z;q) + \sup_{u \in [0,\bar u]} \left[
\mathcal L^{u} w(t,x,z;q) - \tilde \gamma q L(p(t,x),z,u) w(t,x,z;q)
\right] = 0 \\
 w(T,x,z;q) = -\exp (-\tilde \gamma q \Phi(p(T,x),z) )
\end{array}
\right.
\end{equation*}
with
$$
\mathcal L^{u} w = \tilde b w_x + u w_z + \frac{1}{2}
\sigma ^2 w_{xx}.
$$
The corresponding Cauchy problem for $\tilde v$ is immediately
obtained: \begin{equation} \label{cauchy_v}
\left\{
\begin{array}{l}
\displaystyle \tilde v_t(t,x,z;q) + \sup_{u \in [0,\bar u]}
\left[ \tilde{\mathcal L}^{u} \tilde v(t,x,z;q) + q L(p(t,x),z,u)
\right] = 0 \\
\tilde v(T,x,z;q) =  q \Phi(p(T,x),z),
\end{array}
\right.
\end{equation}
with
\begin{eqnarray*}
\tilde {\mathcal L}^{u} \tilde v
& = & \tilde b \tilde v_x + u \tilde v_z + \frac{1}{2}
\sigma^2 \left[ \tilde v_{xx} - \tilde \gamma \tilde v_x^2
\right],
\end{eqnarray*}
which is a particular case of equation
(\ref{eq:HJB_UIPmulti_general}) in this setting. 

To identify $\tilde
v$ with the UIP $v$ and conclude, we need a uniqueness result for the Cauchy problem (\ref{cauchy_v}).
Since $J_x ^0$ is assumed to be Lipschitz in $x$ uniformly in $t$, we can use Remark \ref{Lipschitz} to get the existence of a unique continuous viscosity solution with
quadratic growth to the Cauchy problem (\ref{cauchy_v}). Finally,
the boundedness of the payoff $C^u _{t,T}$ (cf. Assumption \ref{AssC}) clearly implies that
the value function $\tilde v (t,x,z)$ has quadratic growth. Thus
the proof is complete.
\end{proof}

The previous proposition suggests the following approach to
compute the UIP and the corresponding (partial) hedging strategy
of a given structured product in this setting: \begin{itemize}
\item first, solve the pure optimal investment problem $V(t,x,y;0)$ with
no claim;
\item second, compute the $x$-derivative of the log-value function $J^0$
giving the new probability measure $\mathbb Q^0$ as well as the
corresponding dynamics of $X$;
\item finally, solve the maximisation problem in (\ref{eq:v_via_MEMM}),
which is now computed with respect to the control $u$ only; its
value function gives the UIP while its derivative with respect to
$x$ gives the hedging strategy via (\ref{optimalpi}).
\end{itemize}

\begin{Exa}[Linear dynamics model]\label{exCL}
\rm{This example is a slight generalization of the model studied in
\cite[Section 2.2]{CL06}:
\begin{eqnarray}
dF_t & = & F_t\left( (a  - k X_t) dt + \bar \sigma_F dW_t ^1\right),\\
dX_t & = & \delta (\theta - X_t )dt + \sigma \left(\rho dW^1 _t +
\sqrt{1-\rho^2} dW_t ^2\right),
\end{eqnarray}
where $a,k,\bar \sigma_F , \delta, \theta, \sigma$ are real constants,
the correlation $\rho$ belongs to $(-1,1)$, and $(W^1, W^2)$
is a bidimensional Brownian motion as before. Here $F$ represents the
price of a forward contract with maturity $T$ written on a
commodity, whose spot price is $P_t := e^{X_t}$, i.e. $p(t,x)=e^x$ in this case. When $k =
1$ we obtain exactly the model in \cite[Section 2.2]{CL06}.

Notice that if $\bar \sigma_F>0$, $\sigma >0$ and $k=0$, then Assumption \ref{AssConstCorr} holds true, while in the general case when $k \neq 0$ Assumptions \ref{AssConstCorr} (ii) is not satisfied. Nevertheless, as we are going to see, in this example $J_x^0$ is Lipschitz, so that Remark \ref{Lipschitz} applies. Hence we can take $\mu_F$ linear in $x$ as above.

To see that $J_x ^0$ is Lipschitz, consider equation (\ref{eq:HJB_J0}) which in this setting becomes
$$ J_t^0 + \frac{1}{2 \gamma} \frac{(a - k x)^2}{\bar \sigma_F ^2} -
\frac{\rho \sigma}{\bar \sigma_F} (a - k x) J_x^0 + \delta (\theta -
x) J^0_x
- \frac{1}{2} \gamma \sigma ^2 (1 - \rho^2) \left( J_x^0
\right)^2 + \frac{1}{2} \sigma ^2 J_{xx}^0 = 0. $$
Then, in analogy with \cite{BenKar}, one guesses that the
solution $J^0$ has the general form
$$ J^0(t,x) = \alpha(t) + \beta(t) x + \Gamma(t) x^2, $$
such that $J^0(T,x) \equiv \frac{\log \gamma}{\gamma}$. This {\em
ansatz} gives the
system of first order ODEs
$$ \left\{ \begin{array}{l}
\displaystyle \alpha' + \frac{a^2}{2 \gamma \bar \sigma_F ^2} + \left(\delta\theta - \rho
\frac{\sigma}{\bar \sigma_F} a\right) \beta - \frac12
\gamma \sigma  ^2 (1 - \rho^2) \beta^2 + \sigma ^2 \Gamma = 0,
\\
\\
\displaystyle \beta' + \left( \rho k\frac{\sigma}{\bar \sigma_F} -\delta -2\gamma \sigma^2 (1-\rho^2)\Gamma \right)\beta - \frac{ak}{\gamma \bar \sigma_F ^2} + 2 \left (\delta \theta  - \rho
a \frac{\sigma}{\bar \sigma_F} \right)\Gamma  = 0, \\
\\
\displaystyle \Gamma' + \frac{k^2}{2 \gamma \bar \sigma_F ^2} + 2 \left( \rho k
\frac{\sigma}{\bar \sigma_F} -\delta \right) \Gamma - 2 \gamma
\sigma^2 (1 - \rho^2) \Gamma^2 = 0, \\
\end{array} \right. $$
with final condition
$$ \alpha(T) = \frac{\log \gamma}{\gamma}, \qquad \beta(T) = 0,
\qquad \Gamma(T) = 0. $$
The system above is solvable in closed form,
as the third equation is a Riccati equation in $\Gamma$, the
second one is a linear equation in $\beta$, which can be solved
once that $\Gamma$ is known, and, finally, the first one can be
solved in $\alpha$ just by integration. 

Notice that, if the
parameter $k$ appearing in the forward drift is zero then the dynamics of the forward contract does not depend on $X$, so that $J^0$ does not depend on $x$, thus leading to $\beta
\equiv \Gamma \equiv 0$.

Finally, equation (\ref{eq:HJB_UIPmulti_general}) is given in
this case by
$$ \begin{array}{c}
\displaystyle v_t + \left( \delta (\theta - x) - \rho
\frac{\sigma}{\bar \sigma_F} (a - k x) - \gamma \sigma ^2 (1 -
\rho^2) (\beta + 2 \Gamma x) \right) v_x + \frac{1}{2} \sigma
^2 v_{xx} \\
\displaystyle - \frac{1}{2} \gamma \sigma ^2 (1 - \rho^2) v_x^2
+ \sup_{u \in [0,\bar u]} \Big[ u v_z+ q L \Big] = 0,
\end{array} $$
with terminal condition
\begin{equation}\label{eq:terminalCondHJB_UIP_bis}
v(T,x,z;q) = q \ \Phi(e^x,z).
\end{equation}}
\end{Exa}

\subsection{The Cartea-Villaplana model with correlation}\label{CV}
Here we consider a slight generalization of the two factor model for the electricity spot price introduced by Cartea and Villaplana in \cite{CV08}. While the two factors are assumed independent in the original paper \cite{CV08}, here we allow for a possibly non zero (constant) correlation between them.  
We recall briefly the main features of the model. The electricity spot log-price $P_t$ at time $t$ is
decomposed into the sum of two stochastic factors $X^C$ and
$X^D$, i.e.,
$$ P_t = \exp\left( \eta (t) + \alpha_C X^C_t + \alpha_D X^D_t \right),$$
with $\alpha_C < 0$ and $\alpha_D > 0$, where $\eta$ represents a
seasonal continuous deterministic component. The factors $X^i_t$, $i = C,D$,
are Ornstein-Uhlenbeck processes driving, respectively, the
capacity of power plants and the demand of electricity. Their dynamics is given by
$$ d X^i_t = - k^i X^i_t\ dt + \sigma_i(t)\ dW^i_t ,$$
where $k^i$ are constant coefficients, $\sigma_i (t)$ are
deterministic measurable functions of time and each $W^i$, for $i=C,D$, is a
unidimensional Brownian motion such that $ d\langle W^C, W^D
\rangle_t = \rho dt $ with a constant correlation $\rho \in (-1,1)$. Notice that the Cartea-Villaplana
model reduces to the Schwarz-Smith model \cite{SS00} when
$\alpha_C = \alpha_D = 1$ and $k^C = 0$ (or $k^D = 0$).

In this example we work under the following standing assumptions:
\begin{Ass}\label{AssCV}
Let $\sigma_C(t)$ and $\sigma_D (t)$ be continuous, bounded and bounded away from zero over $[0,T]$.
\end{Ass}

Since the interest rate is zero, the price at time $t$ of a forward contract with maturity $T$ can be computed via the usual formula $F_t = \E^\Q[P_T|{\cal F}_t]$, $t\in [0,T]$, for a suitable choice of risk-neutral measure $\mathbb Q$ preserving the Gaussian structure of the model as in \cite[Section 5]{CV08}. Following the approach in \cite{CV08} we can obtain the dynamics of the forward price under the risk-neutral measure $\mathbb Q$ as
\[ \frac{dF_t}{F_t} = \alpha_C e^{-k^C(T-t)}
\sigma_C(t)\ dW_t^{\mathbb Q,C} + \alpha_D e^{-k^D(T-t)} \sigma_D(t)\ dW_t^{\mathbb Q, D}, \]
where $W^{\mathbb Q, C}$ and $W^{\mathbb Q, D}$ are two $\mathbb Q$-Brownian motions with correlation $\rho$. Choosing suitably the market prices of risk as in \cite{CV08} and using Assumption \ref{AssCV}, we can obtain the following forward dynamics under the objective probability $\mathbb P$:
$$ \frac{dF_t}{F_t} = \mu_F (t) dt + \alpha_C e^{-k^C(T-t)}
\sigma_C(t)\ dW_t^C + \alpha_D e^{-k^D(T-t)} \sigma_D(t)\ dW_t^D,
$$
where the drift $\mu_F (t)$ is a bounded function of time.
 
We deal separately with two different situations: the incomplete market case with one forward contract (recall that we have two stochastic factors) and the complete one with two forward contracts.

\subsubsection{The case of one forward
contract}\label{sec:CVoneForward}
In this case the agent is allowed to hedge the structured product by trading only in one forward contract. The Cartea-Villaplana model fits the general setting of Subsection \ref{model} with $X = (X^C,X^D)^*$, whose coefficients are
$$ b(t,x^C,x^D) = \left( \begin{array}{c}
- k^C x^C \\ - k^D x^D \\
\end{array} \right), \qquad
\Sigma^*(t,x^C,x^D) = \left(
\begin{array}{cc}
 \sigma_C(t) & 0 \\
 0 & \sigma_D(t) \\
\end{array}
\right) \cdot \left(
\begin{array}{cc}
 1 & 0 \\
 \rho & \sqrt{1 - \rho^2} \\
\end{array}
\right). $$
Notice that
$\Sigma$ has full rank unless $\rho = \pm 1$, as
$$ \Sigma^* \Sigma = \left( \begin{array}{cc}
  \sigma_C^2 				& \rho \sigma_C \sigma_D\\
  \rho \sigma_C \sigma_D	& \sigma_D^2 \\
  \end{array} \right). $$
Let us consider a forward contract $F$ with maturity $T$. Here $\sigma_F(t,X_t)$
depends only on $t$, so that for simplicity we set $\sigma_F (t) := \sigma_F(t,X_t)$, and we have 
\begin{eqnarray*} 
\sigma_F^*(t) &=& \left(
\begin{array}{cc}
\alpha_C e^{- k^C(T - t)} \sigma_C(t) \quad \alpha_D e^{- k^D(T-t)}
\sigma_D(t) \\
\end{array}
\right) \cdot
\left(
\begin{array}{cc}
 1 & 0 \\
 \rho & \sqrt{1 - \rho^2} \\
\end{array}
\right) \\
&=& \left( \alpha_C e^{- k^C(T - t)} \sigma_C(t) + \rho \alpha_D e^{- k^D(T-t)}
\sigma_D(t) , \sqrt{1-\rho^2}  \alpha_D e^{- k^D(T-t)}
\sigma_D(t) \right).
\end{eqnarray*}
We note that, since the correlation between the spot and forward log-prices is not constant, this model does not fit the setting in Section \ref{sec:exCL}.

In this model the matrix $B$ has rank equal to one. In fact, by definition (cf. equation (\ref{eq:B})) we have
$$ B = \Sigma^* (I_2 - \sigma_F (\sigma_F^* \sigma_F)^{-1}
\sigma_F^*) \Sigma, $$
with
\begin{equation}\label{sigma2F} (\sigma_F^* \sigma_F)(t) =
\alpha_D^2 \sigma^2_D(t) e^{- 2 k^D (T-t)}  + \alpha_C^2 \sigma_C^2(t)
e^{-2k^C(T-t)}  + 2 \rho \alpha_C \alpha_D \sigma_C(t) \sigma_D(t) e^{- (k^C
+ k^D) (T-t)} . \end{equation}
Consider $x = \Sigma^{-1} \sigma_F$. Then $x \neq 0$ and we
have
$$ \langle x, B x\rangle = \sigma_F^* (I_2 - \sigma_F (\sigma_F^*
\sigma_F)^{-1} \sigma_F^*) \sigma_F = \sigma_F^* \sigma_F -
\sigma_F^* \sigma_F (\sigma_F^* \sigma_F)^{-1} \sigma_F^*
\sigma_F = 0. $$
Therefore, working on the image of $B$ in equation
(\ref{F}) is fully justified here, as $\mathrm{rank}(B) = 1$. Now, we show that Assumption \ref{MainAss}(iv) is satisfied in this case. Indeed, a direct computation shows that
\begin{align*}
B&= \kappa (t) \cdot \begin{pmatrix}
\alpha_D^2  e^{-2k^{D}(T-t)} &  -\alpha_C\alpha_D  e^{-(k^C+k^{D})(T-t)}\\
 -\alpha_C\alpha_D e^{-(k^C+k^{D})(T-t)}  & \alpha_C ^2 e^{-2k^{C}(T-t)}
\end{pmatrix}
\end{align*}
where
\[ \kappa (t) := \dfrac{(1-\rho^2)\sigma_C ^2 (t) \sigma_D ^2 (t)}{\alpha_{D}^{2} \sigma_{D}^{2}(t) e^{-2k^{D}(T-t)}+\alpha^{2}_{C} \sigma_{C}^{2}(t) e^{-2k^{C}(T-t)}+2 \rho \alpha_C\alpha_D \sigma_C (t) \sigma_D (t) e^{-(k^{C}+k^{D})(T-t)}}.\]
Hence, the two eigenvalues of $B$ are $\lambda_1 (t) \equiv 0$ and
\[ \lambda_2 (t) = \kappa(t) \left(\alpha_D ^2 e^{-k^D (T-t)} + \alpha_C ^2 e^{-k^C (T-t)} \right) >0 .\] 
By Assumption \ref{AssCV} we have that $\sigma_C(t)$ and $\sigma_D (t)$ are bounded and bounded away from zero over $[0,T]$, yielding $\frac{1}{\delta} \le \lambda_2 (t) \le \delta$ for some $\delta >0$ independent of $t \in [0,T]$. This implies Assumption \ref{MainAss}(iv).

Since in this example the two factors $X^C$ and $X^D$ do not enter in the coefficients of the forward contract dynamics, we expect that the derivative $J^0 _x $ of the log-value function is zero. Indeed, this can be obtained from the PDE (\ref{eq:HJB_J0}) satisfied by $J^0$. Since $\mu_F$ and
$\sigma_F$ do not depend on $X$, such a PDE simplifies to
\begin{equation*}
\displaystyle \displaystyle J^0 _t + \frac{1}{2 \gamma}
\frac{|\mu_F|^2}{|\sigma_F|^2} = 0,
\end{equation*}
which gives
$$ J^0(t) = \frac{\log \gamma}{\gamma} + \int_t^T \frac{1}{2
\gamma} \frac{|\mu_F(s)|^2}{|\sigma_F(s)|^2}\ ds. $$
Therefore $J_x^0 \equiv 0$, and equation
(\ref{eq:HJB_UIPmulti_general}) for the UIP becomes
$$
\displaystyle v_t + \left( b^* - \langle (\sigma_F^*
\sigma_F)^{-1} \sigma_F^* \Sigma ,\mu_F \rangle \right) v_x +
\frac{1}{2} \mathrm{tr} \left(\Sigma^* \Sigma v_{xx}\right) -
\frac{1}{2} \gamma v_x^{*} B v_x + \sup_{u \in [0,\bar u]} \Big[
u v_z+ q L \Big] = 0.
$$ Hence, under Assumption \ref{AssCV}, the considerations in Remark \ref{Lipschitz} apply and give that the UIP $v$ is the unique viscosity solution with quadratic growth of the PDE above.
 
Finally, in this case the candidate optimal hedging strategy is given by $\hat h^q = \hat \pi^q - \hat \pi^0 = - (\sigma_F ^* \sigma_F)^{-1} \sigma_F ^* \Sigma v_x$ as in (\ref{optimalH}), where $\sigma_F ^* \sigma_F$ is as in (\ref{sigma2F}) and  
$$ (\sigma_F^* \Sigma)^*(t) = \left( \begin{array}{c}
\alpha_C e^{-(T-t) k^C} \sigma^2_C(t) + \rho \alpha_D e^{-(T-t)
k^D} \sigma_C(t) \sigma_D(t) \\
\alpha_D e^{-(T-t) k^D} \sigma^2_D(t) + \rho \alpha_C e^{-(T-t)
k^C} \sigma_C(t) \sigma_D(t)
  \end{array}
  \right) .$$

\subsubsection{The case of two forward contracts}

We look now at the much simpler situation where the agent can hedge the structured product by trading in two
forward contracts $F^1$ and $F^2$ with respective maturities
$T_1$ and $T_2$, with $T \leq T_1 < T_2$. Then we have
$$ \sigma_F^*(t) = \left(
\begin{array}{cc}
\alpha_C e^{- k^C(T_1 - t)} \sigma_C(t) & \alpha_D e^{-
k^D(T_1-t)} \sigma_D(t) \\
\alpha_C e^{- k^C(T_2 - t)} \sigma_C(t) & \alpha_D e^{-
k^D(T_2-t)} \sigma_D(t) \\
\end{array}
\right) \cdot
\left(
\begin{array}{cc}
 1 & 0 \\
 \rho & \sqrt{1 - \rho^2} \\
\end{array}
\right).
$$
Of course, in this case $B = 0$, since $\sigma_F$
is invertible. Hence, the market model is complete and we are in the situation described in Remark \ref{HJBcomp}. Analogously to the previous case, it is possible to
find an explicit expression for $J^0$, which is now given by
$$ J^0(t) = \frac{\log \gamma}{\gamma} + \int_t^T \frac{1}{2
\gamma} \langle \mu_F(s), (\sigma^*_F \sigma_F)^{-1}(s), \mu_F(s)
\rangle\ ds. $$
Here again $J_x^0 \equiv 0$, so that Remark \ref{Lipschitz} applies and equation
(\ref{eq:HJB_UIPmulti_general}) for the UIP becomes
$$
\displaystyle v_t + \left( b^* - \langle (\sigma_F^*
\sigma_F)^{-1} \sigma_F^* \Sigma ,\mu_F \rangle \right) v_x +
\frac{1}{2} \mathrm{tr} \left(\Sigma^* \Sigma v_{xx}\right) +
\sup_{u \in [0,\bar u]} \Big[ u v_z+ q L \Big] = 0.
$$
Finally the candidate optimal hedging
strategy is given by $\hat h^q = -(\sigma_F^* \sigma_F)^{-1} \sigma_F^* \Sigma v_x$ as before,
where this time
$$ (\sigma_F^* \sigma_F)^{-1} (\sigma_F^* \Sigma)(t) = \left(
\begin{array}{cc}
\displaystyle \frac{e^{-(T_1-t) k^C}}{\alpha_C
\left(1-e^{(T_1-T_2)(k^C - k^D)}\right) } &
\displaystyle \frac{e^{-(T_1-t) k^D}}{\alpha_D
\left(1-e^{(T_1-T_2)(k^D - k^C)}\right)} \\
  \\
\displaystyle \frac{e^{-(T_2-t) k^C}}{\alpha_C
\left(1-e^{(T_1-T_2)(k^D - k^C)}\right)}
 &
\displaystyle \frac{e^{-(T_2-t) k^D}}{\alpha_D
\left(1-e^{(T_1-T_2)(k^C - k^D)}\right)} \end{array} \right). $$

\section{Numerical results} \label{sec:num} 
In this section we present some numerical applications of our results to swing options (see Example \ref{ex:swing})\footnote{All the numerical tests were performed in
{\sc MATLAB} R2015b.}. We focus on this type of contract for essentially two reasons: first, swing options are the main type of volumetric contracts in commodity markets and, second, we want to compare our results to those in \cite{BLN11}.
 
More specifically, in Subsection \ref{sec:num1} we consider the benchmark case with strike price $K=0$ and minimal cumulated quantity $m=0$ and we compare the prices obtained following the UIP approach to those in \cite{BLN11}; in Subsection \ref{sec:num2} we consider more general swing options with $K > 0$ and $m >0$. 
In both parts, we compute the solution of the relevant PDEs using finite difference schemes, as suggested in \cite{BLN11}.

\subsection{Comparison with the results in Benth et al. \cite{BLN11}}\label{sec:num1}
Here, we compare the UIP, obtained by solving the non-linear PDE (\ref{eq:HJB_UIPmulti_general}), with the risk-neutral price that one can find in the energy market literature (e.g.
\cite{BCV13,BLN11,ChenForsyth06,Felix12,TDR09}). The latter is given in terms of a PDE which is essentially linear, except for the first derivative in $z$ and which has the same form as Equation (\ref{HJBcomp}).  

We consider, as in \cite{BLN11}, one swing option (i.e., we take $q=1$) with parameter values
$$
K = 0, \quad \bar u =1 , \quad T=1, \quad m=0, \quad M =0.5 ,
$$ 
i.e., the control $u$ belongs to $[0,1]$ and the holder faces the problem of picking the most favorable price of the commodity, up to a certain total volume $M$.
We set the risk-free interest rate to zero.
Moreover, in order be as close as possible to the setting considered in Benth et al. \cite{BLN11}, where $Z^u$ is constrained to fulfil $Z^u _T \leq M = 0.5$, we use the penalty function
\begin{equation}\label{penalty}
\Phi (p,z) = \min(0,- C(z - 0.5))
\end{equation}
with $C=1000$. Indeed, the authors in \cite{BCV13} prove that when $C \to \infty$ the
price of a contract with penalty $\Phi$ as in (\ref{penalty}) converges to the price of
a contract with the constraint on $Z^u$ as above.\\
Moreover, with a view towards the comparison with \cite{BLN11}, we choose a special case of the linear dynamics model of Example \ref{exCL} with $k
= 0.01$ and where
\begin{equation}
\label{coeff} 
\delta = 0.4, \quad \sigma = 0.55, \quad \theta
= 3.5, \quad \sigma_F = 0.3, \quad a = 0.03, \quad \rho = 0.5. 
\end{equation}
Finally, the risk-aversion parameter is set to be $\gamma = 0.01$.
\begin{Rem}
\emph{Notice that the coefficients $\delta, \theta$ and $\sigma$ above correspond, respectively, to $\kappa, \mu$ and $\sigma$ in \cite{BLN11}, and they have the same numerical values as in \cite{BLN11}. The remaining coefficients $\sigma_F$ and $a$ refer to the dynamics of the forward contract $F$, which is not part of the model in \cite{BLN11}, and $\rho$ is the correlation between (the
logarithms of) the spot price $P$ and $F$.}
\end{Rem}

We compute both kinds of price (the risk-neutral price and the UIP) for such an option, solving numerically the corresponding PDE via finite difference methodology with a backward time stepping scheme. For the sake of consistency, the approximating domain for the logarithm of the spot price is the same as in \cite{BLN11} and the domain for $Z$ is obviously $[0,\bar u T] = [0,1]$, thus leading to a global domain $\mathcal D:= [0,T] \times [x_{min}, x_{max}] \times [0,1]$, with $x_{min} = \ln(21.6)$ and $x_{max} = \ln (73.9)$.  The mesh and the boundary conditions are the same as in \cite{BLN11}, as well as the numerical approximations of $v_t$ and $v_z$, which are explicit in time, and $v_{xx}$, while there is a difference in the partial derivative with respect to $x$ (see Benth, \cite[Equations (4.9) and (4.10)]{BLN11}): denoting by $v_{i,j}^n$ the approximation of $v(t_n, x_i,z_j;q)$ with $n \in \{ 0, \dots,N \}, i \in \{ 0, \dots, I \}$ and $j \in \{ 0, \dots, J \}$ we have
$$
v_x (t_n, x_i,z_j) \stackrel{\sim}{=} \frac{v_{i+1,j}^{n+1} - v_{i-1,j}^{n+1}}{ 2 \Delta x}, $$
with $\Delta x := \frac{ x_{max} - x_{min}}{I}$. 

We plot in Figure \ref{diff} the prices of the swing contract
at time $t=0.5$, obtained with the two approaches (a similar picture can be provided at any other date). In order to stress the difference between the two prices, we do not plot the surfaces for $z \in [0.25 , 0.5]$ (remember that $M=0.5$). As we can see, the two price
surfaces have similar shapes, even though the ``classical'' procedure slightly
overprices the option with respect to the UIP when the log spot price is high.
The difference between the two prices is clearly due to the risk aversion $\gamma$ and the correlation $\rho$ between the underlying and the forward market, where the buyer can invest. We conclude this part by illustrating in Tables \ref{table1} and \ref{table2} below the effect that those two parameters separately have on the UIP. 
Concerning the dependence of the UIP on $\gamma$, which are summarized in Table \ref{table1}, we choose $x, z$ and $t$ so that the difference between the UIP for $\gamma$ that varies and the UIP for $\gamma = 0.01$ is the biggest possible (on bigger discretization domain with $x_{min} = \ln(0.01)$ and $x_{max} = \ln (500)$), finding $x=6.1903, z= 0.4178, t=0.5$. 
Similarly Table \ref{table2} shows how the UIP varies with $\rho$, for $x=6.0931, z= 0, t=0.5$ and for $\gamma$ fixed to $0.01$.
As we can see, the UIP is decreasing in $\gamma$ while it is increasing in $\rho$. Both effects are very natural, since a higher risk aversion for the buyer is expected to induce a lower price, while one would expect a higher price as the correlation $\rho$ with the forward market increases since this widen the hedging opportunities for the buyer so reducing the risk. The combined effect of $\gamma$ and $\rho$ is not clear in general.

\begin{figure}[h!]
 \centering
\includegraphics[scale=0.8
]{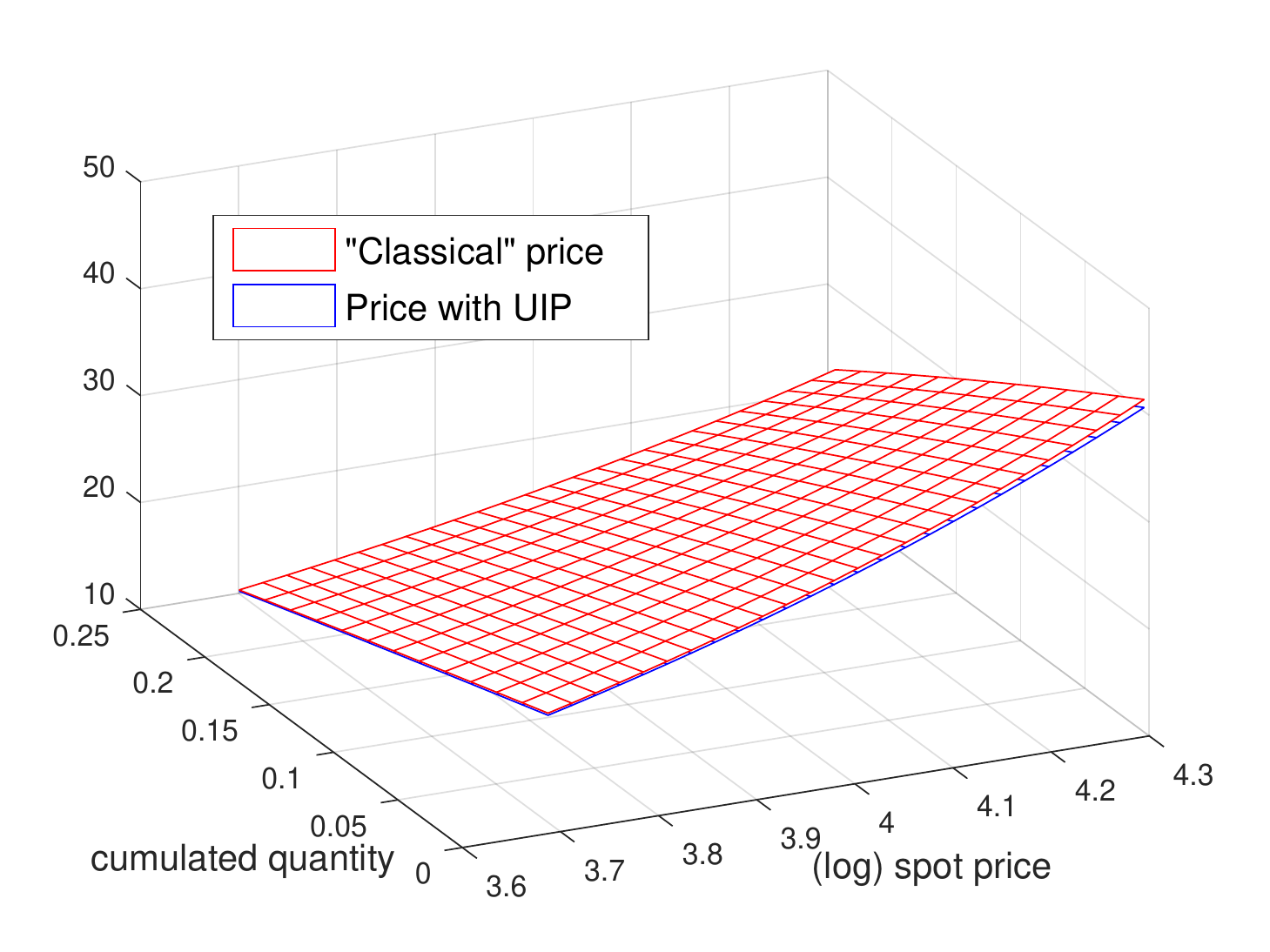}
\caption{``Classical'' price (above) of a swing contract and
UIP (below).  Prices are computed at  $t=0.5$. \label{diff}}
\end{figure}

\begin{table}[h!]
\centering 
\begin{tabular}{|l| c c c c c c |}
\hline
$\gamma$ & 0.01 & 0.02 & 0.04 & 0.06 & 0.08 & 0.10 \\
\hline
\textrm{UIP}  &54.8927 & 52.9527 & 48.3202 & 43.5541 & 39.9692 & 37.7116  \\
\hline
\end{tabular}
\caption{\label{table1} Different values of UIP for a varying $\gamma$ and $x=6.1903, z= 0.4178, t=0.5$.}
\end{table}

\begin{table}[h!]
\centering 
\begin{tabular}{|l| c c c c c |}
\hline
$\rho$ & 0.01 & 0.25 & 0.50 & 0.75 & 0.99 \\
\hline
\textrm{UIP} & 283.6143 & 287.3581 & 300.0573 & 322.1527 & 350.3785  \\
\hline
\end{tabular}
\caption{\label{table2} Different values of UIP for a varying $\rho$ and $x=6.0931, z= 0, t=0.5$.}
\end{table}

\subsection{A more realistic example}\label{sec:num2}
We now focus on computing the UIP of a more realistic swing option contract, with $q=1$,
$$
K = \exp (2.5), \quad \bar u=1, \quad T=1, \quad m=0.1, \quad M=0.5.
$$
Indeed, swing contracts usually have strictly positive strike price and a nonzero minimal cumulated quantity to be purchased. The penalty function we use is the one in equation \eqref{eq:Phi} with $C = 1000$. 
We keep working under the linear dynamics model in Example \ref{exCL}, with $k=0.01$ and with parameters as in \eqref{coeff}. We solve the PDE for $v$ using a backward time stepping finite difference method on the domain $ {\mathcal D} = [0,T] \times [x_{min}, x_{max}] \times [0,1]$, where now we set $x_{min}= \ln(0.01), x_{max}= \ln(500)$. Notice that here $[x_{min}, x_{max}] $ is wider than in the previous subsection, so that the probability that $X$ belongs to this interval is higher. This leads to more stable numerical results.

The approximating schemes for $v_t, v_z,v_x$ and $v_{xx}$ are as in Subsection \ref{sec:num1}, as well as the boundary conditions, 
except for $x=x_{min}$: if $x=x_{min}$ the optimal operational behavior still consists in waiting as long as possible before exercising (this is because $x_{min}$ is much smaller than the expectation of $X$ in the long run and the price is thus expected to increase), but now we have to take into account the constraint $m =0.1$ (recall that $m=0$ in \cite{BLN11}).
Hence we set:
\begin{equation*}
u_s = \left\{
\begin{array}{lcl}
0, && s \in \left( t, T - \frac{(m-z)^+}{\bar u} \right]\\
\bar u, && s \in \left( T - \frac{(m-z)^+}{\bar u}, T \right).
\end{array}
\right.
\end{equation*}
With this choice of $u$, it is possible to explicitly compute the approximating price (recall that in the linear dynamics model in Example \ref{exCL} the spot price is $P_t = e^{X_t}$ and that $\bar u =1$)
$$
\mathbb E_{t,x_{min},z} \left [ \int_t^T u_s (e^{X_s} - K) ds + \Phi(e^{X_T},Z_T^u) \right] = \mathbb E_{t,x_{min},z} \left [ \int_{T - (m-z)^+}^T (e^{X_s} - K) ds + \Phi(e^{X_T},Z_T^u) \right] 
$$ 
as done in Benth et al. \cite[Appendix A]{BLN11}. \medskip

In Figure \ref{prezzi2} we plot the price of the swing option at two different dates. 
\begin{figure}[!h]
 \centering
 \subfigure[Price at $t=0.5$]
  {\includegraphics[width=.48\textwidth
  ]{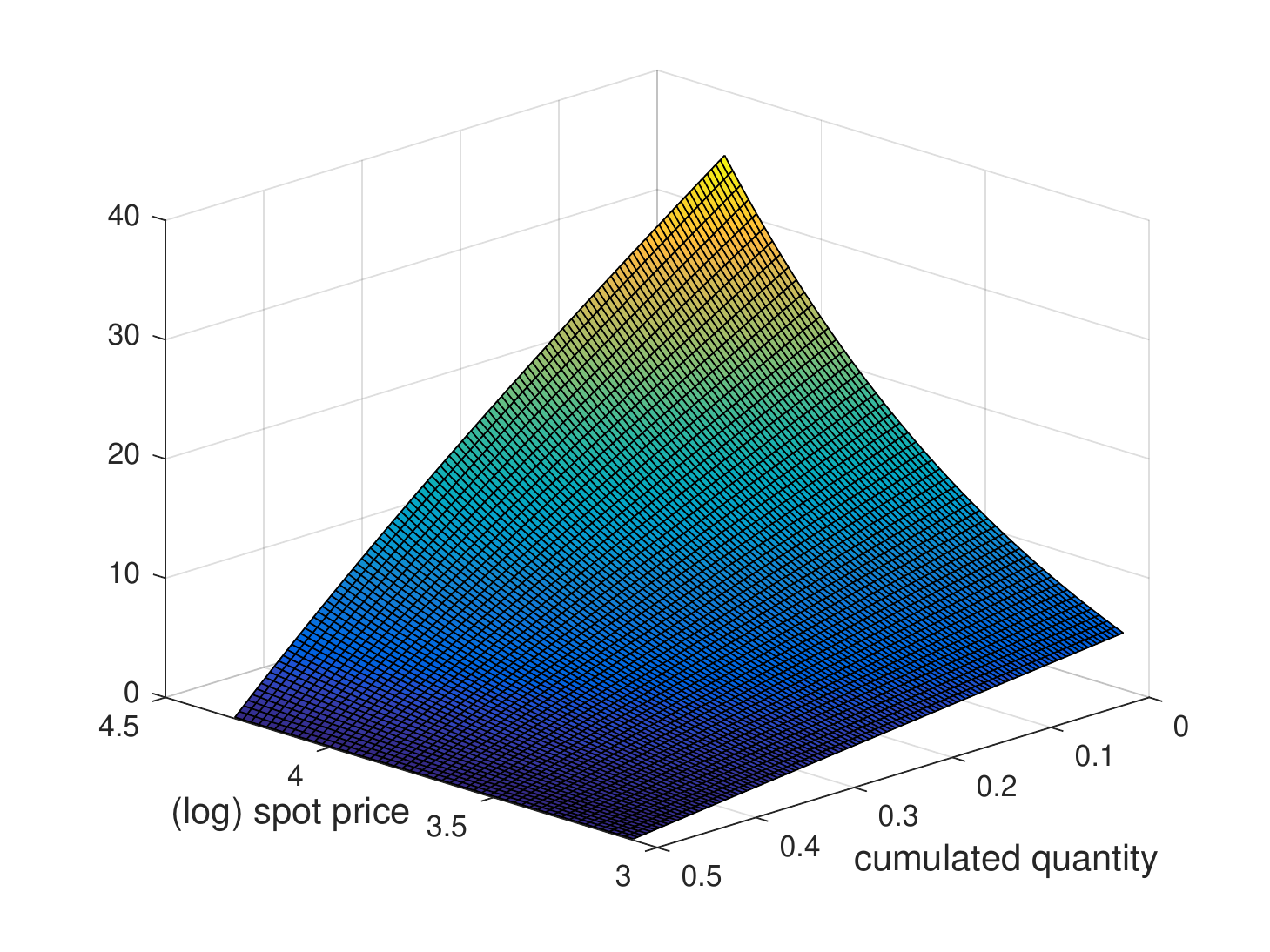} \label{lin}} \hfill
 \subfigure[Price at $t=0.75$]
  {\includegraphics[width=.48\textwidth
  ]{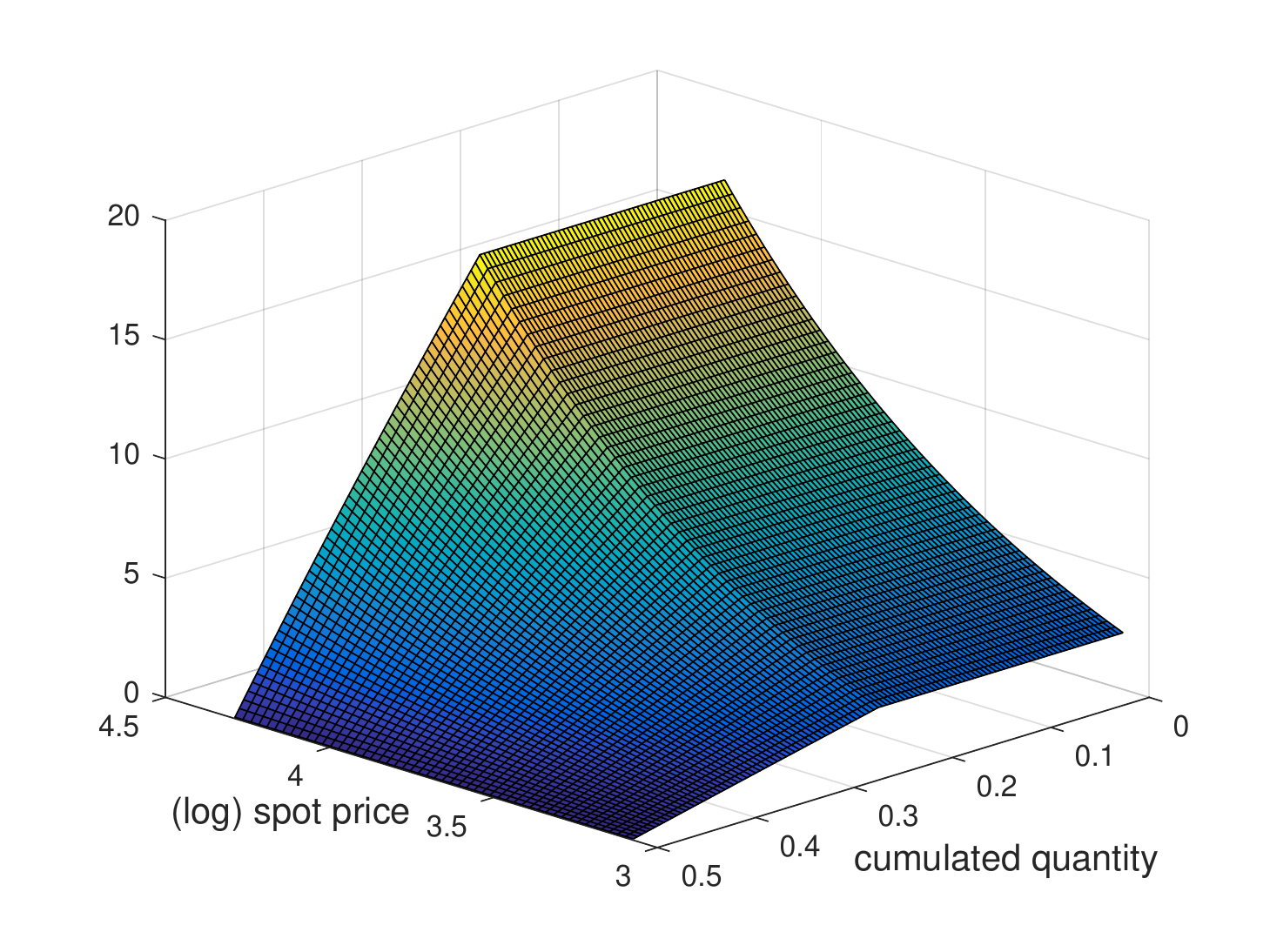} \label{nonlin}} \\
\caption{Price of one swing contract with minimal annual quantity $m = 0.1$ and maximal quantity $M = 0.5$. \label{prezzi2}}
\end{figure}
Notice that in both Figures \ref{prezzi2}(a) and (b) we cut the domain in $z$ in order to focus on positive prices: for $0.5 =M < z < 1$ the penalty function plays a crucial role and the price becomes negative.
We see that the UIP is decreasing in $z$ (as in \cite{BLN11}) and increasing in $x$. Moreover, from Figure \ref{prezzi2}(b) it is clear that for $z > 0.25$ the price is strictly decreasing. This might be explained as follows: for a fixed value of the log spot $x$ and for $t=0.75$, if $z>0.25$ the value of the contract is lower than when $z \le 0.25$ and it even becomes lower and lower as $z$ increases, since the time to maturity is equal to $0.25$ and so if $z>0.25$ the buyer has less opportunities to exercise the option, hence less possibilities to take advantage of (possibly) higher prices.

Moreover, as an example, in Figure \ref{ControlloOttimo} we show the optimal exercise strategy $\hat u$ at time $t=0.75$ as a function of the (log) spot price $x$ and of the cumulated quantity $z$. In the grey region $\hat u = \bar u$, while in the white region $\hat u =0$. 
\begin{figure}[!h]
 \centering
\includegraphics[scale=0.35]{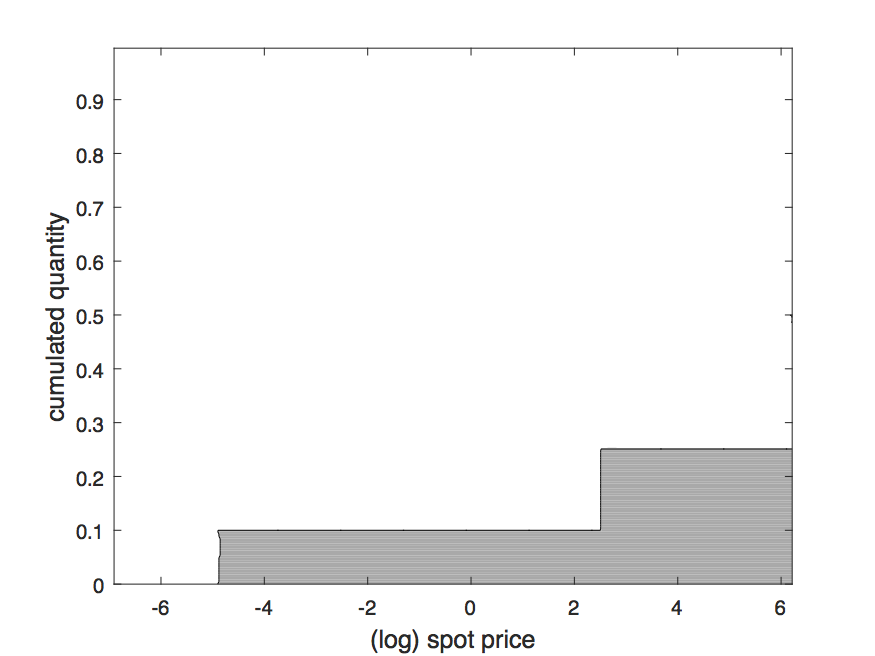}
\caption{Optimal control $\hat u$ at time $t=0.75$. In the grey region $\hat u = \bar u=1$, in the white region $\hat u =0$. \label{ControlloOttimo}}
\end{figure}
From Figure \ref{ControlloOttimo} it is clear that, unless the spot price is very low, if the cumulated quantity $z < m=0.1$, then it is always optimal to exercise the option, to avoid the penalty. Furthermore, when $ x > 2.5$, equivalently the spot price $e^x$ is bigger than the strike price $ K= \exp(2.5)$ and so the optimal policy consists in exercising the option (i.e., $\bar u = 1$) whenever $z \in [0,0.25]$. On the other hand, if the spot price is higher than the strike, $x > 2.5$, and if the cumulated quantity satisfies $z>0.25$ then it is not optimal to exercise the option: in the current state $m < z < M$, thus we are not incurring the penalty and the more we have used of our control, the higher the spot price has to be before we are willing to exercise.

We conclude this section by showing in Figure \ref{PiOttimo} the candidate optimal hedging strategy $\hat h^1$ found in equation \eqref{optimalH} as a function of the (log) spot price $x$ and of the cumulated quantity $z$, at time $t=0.5$.
\begin{figure}[!h]
 \centering
\includegraphics[scale=0.6]{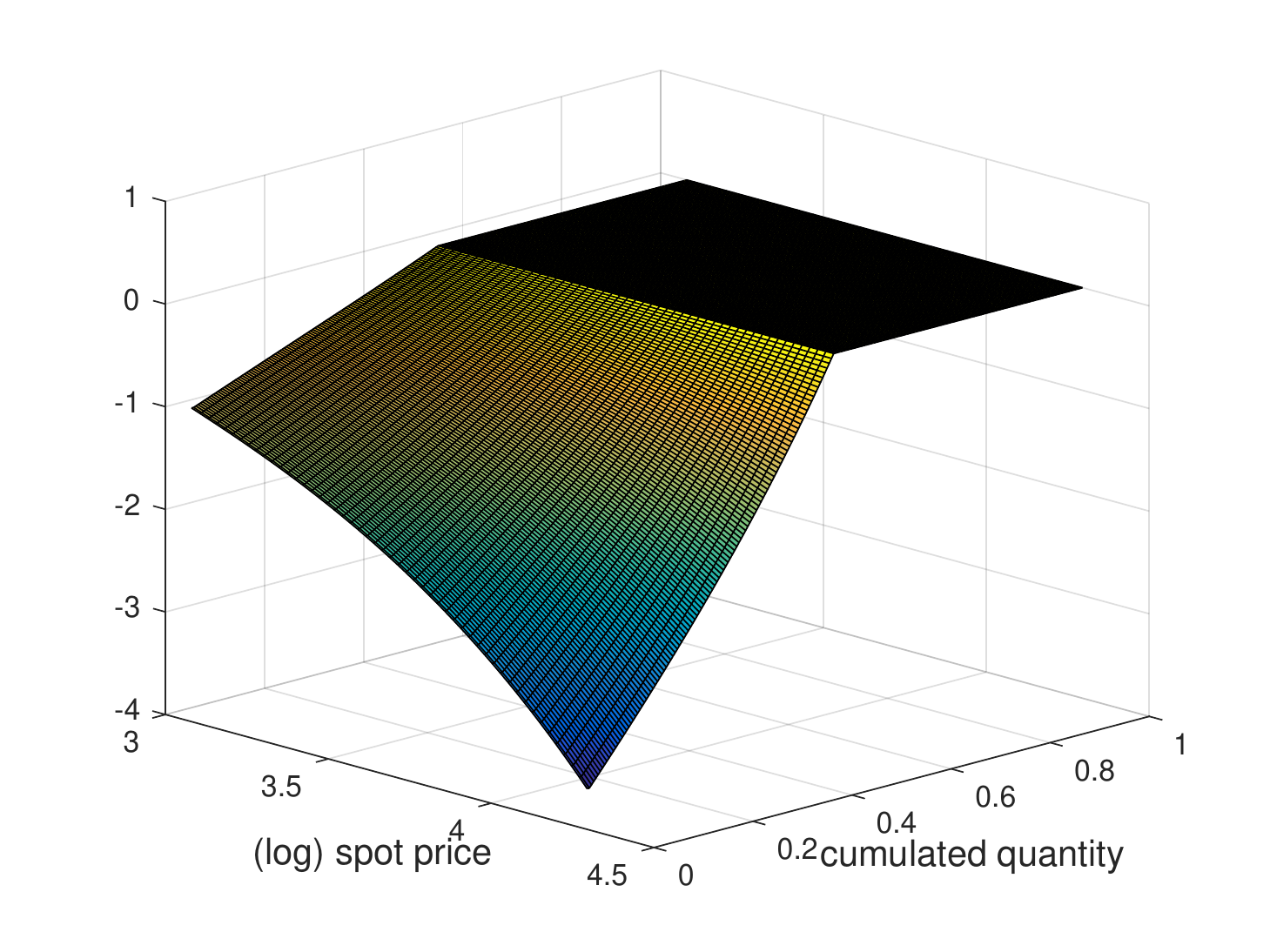}
\caption{Candidate optimal hedging strategy $\hat h^1$ at time $t=0.5$. \label{PiOttimo}}
\end{figure}
We notice that, being the UIP increasing in $x$, $v_x$ is positive on our domain (recall equation \eqref{optimalH}), so that $\hat h^1$ is always negative: in order to hedge a buyer position in a swing option it is always ``optimal'' to sell the forward contract. Moreover, for a fixed $z$, as the (log) spot price increases, the quantity of forward contracts to sell increases. On the other hand, for a fixed $x$, $\hat \pi$ is increasing as a function of $z$, for $z \in [0,0.5]$ (meaning that as the cumulated quantity $z$ increases towards $M=0.5$, selling forward contracts is less and less needed), while $\hat h^1 =0$ for $z \ge M=0.5$, as expected. 

\section{Conclusions}\label{conclusion}
In this paper, we considered the problem of pricing and hedging of structured products in
energy markets from a buyer's perspective using the (exponential) utility
indifference pricing approach. The main novelty with respect to the existing literature is that buyer has the possibility to trade in the forward market in order to hedge the risk coming from the structured contract.

We characterized the UIP in terms of continuous
viscosity solutions of a suitable nonlinear PDE. As a
consequence, we were able to identify a candidate for the optimal exercise strategy of
the structured product as well as a portfolio strategy 
partially hedging the financial position.

Moreover, in a more specific setting with two assets and
constant correlation, we showed that the UIP equals the value function
of an auxiliary simpler optimization problem under a risk neutral
probability, that can be interpreted as a
perturbation of the minimal entropy martingale measure. 

Finally, we provided some numerical applications in the case of swing options. In particular, we computed the UIP price as well as the optimal exercise and hedging strategies for a buyer of one swing option in the linear dynamics model, by solving the corresponding nonlinear PDEs via finite difference schemes. We highlighted the differences with respect to the classical price as in \cite{BLN11} and discussed some qualitative properties.

\appendix

\section{Proof of Proposition \ref{visco}}
The maximisation problem (\ref{ValFunc}) fits the setting of Section 5 in the paper \cite{BT11} on weak dynamic programming principle. In particular, their Corollary 5.6 applies. More precisely, the essential ingredients in the proof of Corollary 5.6 are the a-priori estimate (5.2) in \cite{BT11}, the local boundedness of the value function and the lower semi-continuity of the objective function in $(t,x,y,z)$ for all admissible controls. First, the a-priori estimate holds due to (\ref{estimate}). Concerning the local boundedness of the value function, it can be easily checked that in our setting the
value function is bounded since it is trivially nonpositive and,
being $(u,\pi )=(0,0)$ an admissible strategy, we have
\[ V(t,x,y,z;q) \ge -\frac{1}{\gamma} \exp\left\{-\gamma \left[ y+ q
\inf_{p \in \mathbb R} \left( (T-t) L(p,0,0) + \Phi(p,0) \right)
\right]\right\} > -\infty\]
since the functions $L$ and $\Phi$ are bounded (cf. Assumption \ref{AssC}). Let $(u,\pi)$ be an admissible given control. Since the control is now fixed, we drop it from the notation of the state variable at maturity and denote them as $A_T ^{t,a} := (X_T ^{t,x},Y_T ^{t,a}, Z_T ^{t,a})$ with $a =(x,y,z)$, to stress the dependence on the initial data. Now consider the objective function
\[[0,T] \times \mathbb R^m \times \mathbb R \times [0,\bar uT] \ni (t,x,y,z)=(t,a) \mapsto \mathbb E [ G(A_T ^{t,a})],\]
where $G$ is defined in (\ref{G}).
From the continuity of the function $G$ and of the state variables $A^{t,a} _T$ with respect to the initial data $(t,a)$, we get that $G(A^{t,a} _T )$ is also continuous in $(t,a)$. Moreover, notice that since $L$ and $\Phi$ are bounded (ref. Assumption \ref{AssC}) we have
\[ | G(A_T ^{t,a}) | \le C \exp \left(-\gamma \left(y + \int_t ^T \left\langle \pi_s , \frac{dF_s}{F_s}\right\rangle\right)\right),\]
for some constant $C>0$. Therefore, to prove the lower semi-continuity of the objective function it suffices to show that the family of random variables
\[ \left\{  \exp \left(-\gamma \int_t ^T \left\langle \pi_s , \frac{dF_s}{F_s}\right\rangle \right) : t \in [0,T] \right\}\]
is uniformly integrable. We prove that they are bounded in $L^2$ for all admissible controls, i.e.
\[ \sup_{t\in [0,T]} \mathbb E \left [  \exp \left(-2\gamma \int_t ^T \left\langle \pi_s , \frac{dF_s}{F_s}\right\rangle \right) \right ] < \infty ,\]
which will imply the uniform integrability. Let $\mathcal F_{t,T}$ be the smallest filtration generated by the Brownian increment after $t$ and satisfying the usual conditions. Consider the following change of measure on $\mathcal F_{t,T}$:
\begin{equation}\label{dQdP} \frac{d\mathbb Q _t}{d\mathbb P} := \exp \left( -2\gamma \int_t ^T \pi^*_s \sigma^*_F (s,X^{t,x}_s) dW_s - 2\gamma^2 \int_t ^T | \pi^*_s \sigma^*_F(s,X^{t,x}_s) | ^2 ds \right),\end{equation}
which is well defined since it satisfies the criterion in \cite{LS1}, pp. 220-221, thanks to the boundedness of $\sigma_F^* \sigma_F$ (cf. Assumption \ref{MainAss} (iii)) and the admissibility property (\ref{adm}). Moreover, the change of measure (\ref{dQdP}) satisfies $\sup_{t\in [0,T]} \mathbb E [ (d\mathbb Q_t / d\mathbb P)^2 ] < \infty$.  This is a consequence of the admissibility of $\pi$ as in (\ref{adm}). Indeed,
\[
\frac{d\mathbb Q_t}{d\mathbb P}  \le  \exp \left(-2\gamma \int_t ^T  \pi^*_s \sigma^*_F (s,X^{t,x}_s) dW_s \right),
\]
giving that
\begin{eqnarray*}
\mathbb E\left[ \left(\frac{d\mathbb Q_t}{d\mathbb P}\right)^2 \right] & \le & \mathbb E \left[ \exp \left(-4\gamma \int_t ^T  \pi^*_s \sigma^*_F (s,X^{t,x}_s) dW_s \right)\right] \\
& \le & \mathbb E\left[\exp \left(-8\gamma \int_t ^T  \pi^*_s \sigma^*_F (s,X^{t,x}_s) dW_s - 2\delta \int_t ^T | \pi^*_s \sigma^*_F(s,X^{t,x}_s) | ^2 ds \right)\right]^{1/2} \\
&& \times \mathbb E\left[ e^{2\delta \int_t ^T | \pi^*_s \sigma^*_F(s,X^{t,x}_s) | ^2 ds}\right]^{1/2} \\
& = & \mathbb E\left[ e^{2\delta \int_0 ^T | \pi^*_s \sigma^*_F(s,X^{t,x}_s) | ^2 ds}\right]^{1/2},
\end{eqnarray*}
with $\delta$ such that $2\delta = (8\gamma)^2 /2$, since the first exponential in the second inequality above is a true martingale. Moreover, since $\pi$ is admissible we have $\mathbb E\left[ e^{2\delta \int_0 ^T | \pi^*_s \sigma^*_F(s,X^{t,x}_s) | ^2 ds}\right] < \infty$. 
As a consequence, we obtain that $d\mathbb Q_t / d\mathbb P$ is square integrable. 

Therefore we have
\begin{eqnarray*} && \mathbb E \left [  \exp \left(-2\gamma \int_t ^T \left\langle \pi_s , \frac{dF_s}{F_s}\right\rangle \right) \right ] \\
&& \quad = \mathbb E_{\mathbb Q_t} \left[  \exp \left( -2\gamma \int_t ^T \pi^* _s \mu_F(s,X^{t,x}_s) ds + 2\gamma^2 \int_t ^T | \pi_s ^* \sigma^* _F (s,X^{t,x}_s) |^2 ds \right) \right] \\
&& \quad \le \mathbb E \left [ \left(\frac{d\mathbb Q_t}{d\mathbb P} \right)^2 \right] \mathbb E \left[  \exp \left( -4\gamma \int_t ^T \pi^* _s \mu_F(s,X^{t,x}_s) ds + 4\gamma^2 \int_t ^T | \pi_s ^* \sigma^* _F (s,X^{t,x}_s) |^2 ds \right) \right].
\end{eqnarray*}
Using the linear growth condition of $\mu_F$ and the boundedness of $\sigma_F ^* \sigma_F$ (cf. Assumption \ref{MainAss}(ii) and (iii)), we have
\[ \exp \left( -4\gamma \int_t ^T \pi^* _s \mu_F(s,X^{t,x}_s) ds + 4\gamma^2 \int_t ^T | \pi_s ^* \sigma^* _F (s,X^{t,x}_s) |^2 ds \right) \le \exp \left( \int_0 ^T \left(c_1 | \pi_s | + c_2 |\pi_s|^2 + c_3 |X_s|^2 \right) ds \right),\]
for some positive constants $c_1,c_2,c_3$. To conclude it suffices to prove that the RHS above is integrable for $\mathbb P$. This follows from the admissibility of $\pi$ as in (\ref{adm}) and the exponential uniform bound (\ref{expX}) for $X$.

Finally, even though the space of admissible controls in our setting is smaller than the one in \cite{BT11}, the value functions are the same since any controls in their space $\mathcal U_0$ can be clearly approximated by admissible controls in $\mathcal A$ through truncation. The result follows.

\section{Regularity properties of the log-value function}

In order to prove the next lemma we follow closely the
approach in Pham \cite{Pham02}, which has also been used in \cite{mnif07} in a slightly different model with stochastic volatility with jumps and for an agent with exponential utility. Since the proof mimicks closely the arguments in \cite{Pham02}, we only sketch them pointing out the main differences.

\begin{Lem}\label{quadratic_growth}
Let $q\geq 0$. Let Assumptions \ref{AssC} and \ref{ass-coeff} hold. Under Assumption \ref{MainAss} the log-value function $J(t,x,z;q)$ defined as in
(\ref{J}) has quadratic growth in $(x,z)$ uniformly in $t$.
\end{Lem}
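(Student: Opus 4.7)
The plan is to reduce the statement to a uniform bound on the claim-free log-value function $J^0(t,x) := J(t,x,z;0)$ and then to control $J^0$ by a Girsanov change of measure to an equivalent martingale measure whose market price of risk is bounded. Under the assumptions in force this will in fact give uniform boundedness of $J$, which is strictly stronger than the claimed quadratic growth, but the argument stays in the spirit of Pham's approach \cite{Pham02}.

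\textbf{Reduction to $J^0$.} Assumption \ref{AssC} gives $|C_{t,T}^u| \leq M$ for some constant $M$ uniformly in $(u,t,x,z)$. Writing $-\gamma q M \leq -\gamma q C_{t,T}^u \leq \gamma q M$ and multiplying by the strictly negative quantity $-\tfrac{1}{\gamma}e^{-\gamma \int_t^T \pi^*_s\, dF_s/F_s}$, I would obtain, after taking expectation and then supremum over $(u,\pi)\in \mathcal A_t$,
\[
 e^{\gamma q M}\,V^0(t,x,0) \leq V(t,x,0,z;q) \leq e^{-\gamma q M}\,V^0(t,x,0),
\]
where $V^0(t,x,0) := V(t,x,0,z;0)$ does not depend on $z$. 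Applying the decreasing map $w\mapsto -\tfrac{1}{\gamma}\log(-w)$ yields $|J(t,x,z;q)-J^0(t,x)| \leq qM$, so it suffices to bound $J^0$.

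\textbf{Two-sided bound on $J^0$.} The lower bound is immediate: taking $\pi\equiv 0$ as a test strategy gives $V^0(t,x,0)\geq -\tfrac{1}{\gamma}$ and hence $J^0(t,x)\geq \tfrac{\log\gamma}{\gamma}$. For the upper bound I would introduce the Girsanov density
\[
 \frac{d\mathbb Q^*}{d\mathbb P}\bigg|_{\mathcal F_{t,T}} := \mathcal E\!\left(-\int_t^\cdot (\lambda^*_s)^{*}\, dW_s\right)_T,\qquad \lambda^*_s := \sigma_F(\sigma_F^*\sigma_F)^{-1}\mu_F(s,X_s),
\]
which, by Assumption \ref{MainAss}(ii)-(iii), satisfies $|\lambda^*|\leq C$ uniformly; Novikov's criterion applies and $\mathbb Q^*\sim \mathbb P$ with $F$ a $\mathbb Q^*$-(local) martingale. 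Applying Jensen under $\mathbb Q^*$ to the positive random variable $e^{-\gamma \int_t^T \pi^*_s\, dF_s/F_s}\cdot d\mathbb P/d\mathbb Q^*$, and using the true $\mathbb Q^*$-martingale property of $\int_t^\cdot \pi^*_s\, dF_s/F_s$, I would obtain for every admissible $\pi$
\[
 \mathbb E^{\mathbb P}_{t,x}\!\left[e^{-\gamma \int_t^T \pi^*_s\, dF_s/F_s}\right] \geq e^{-H_t(\mathbb Q^*|\mathbb P)}.
\]
Since $H_t(\mathbb Q^*|\mathbb P)=\tfrac12 \mathbb E^{\mathbb Q^*}_{t,x}\!\left[\int_t^T |\lambda^*_s|^2\,ds\right]\leq \tfrac{C^2 T}{2}$ uniformly, taking the infimum over admissible $\pi$ and inverting the logarithm gives the uniform upper bound $J^0(t,x)\leq \tfrac{\log\gamma}{\gamma}+\tfrac{C^2 T}{2\gamma}$.

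\textbf{Main obstacle.} The delicate point is the true $\mathbb Q^*$-martingale property (not merely local) of the wealth stochastic integral $\int_t^\cdot \pi^*_s\,dF_s/F_s$ under $\mathbb Q^*$, which is what ensures $\mathbb E^{\mathbb Q^*}\!\left[\int_t^T \pi^*_s\,dF_s/F_s\right]=0$ in the Jensen step. I would handle this following the same circle of $L^2$-estimates used in the Appendix proof of Proposition \ref{visco}: the boundedness of $\lambda^*$ gives $d\mathbb Q^*/d\mathbb P\in L^p$ for every $p$; the boundedness of $\sigma_F^*\sigma_F$ (Assumption \ref{MainAss}(iii)) controls $|\pi^*\sigma_F^*|^2$ by a constant multiple of $|\pi|^2$; and the exponential moment condition (\ref{adm}) on $\pi$, combined with H\"older's inequality and the power-series expansion of $e^{\epsilon|\pi_s|^2}$, yields $\sup_s \mathbb E^{\mathbb Q^*}[|\pi_s|^2]<\infty$ and hence $\mathbb E^{\mathbb Q^*}\!\left[\int_t^T |\pi^*_s\sigma_F^*(s,X_s)|^2 ds\right]<\infty$, which by the standard BDG criterion makes the stochastic integral a true $\mathbb Q^*$-martingale.
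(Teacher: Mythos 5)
Your proposal is correct, and it takes a genuinely different route from the paper. The paper's proof is PDE-based and follows Pham \cite{Pham02}: it truncates the Fenchel--Legendre transform of the quadratic gradient term, invokes Fleming--Rishel to obtain smooth solutions $J^{0,k}$ of the truncated equations, derives a gradient bound $|J^{0,k}_x|\le C(1+|x|)$ uniform in $k$ via a stochastic control representation and Lemma 11.4 of \cite{FleSon}, and then identifies $J^0=J^{0,k}$ for $k$ large; quadratic growth of $J^0$ (hence of $J$, by the same reduction via boundedness of $C^u_{t,T}$ that you use) follows from the linear growth of the derivative. Your argument instead bounds $J^0$ directly by the classical exponential-utility duality: the trivial strategy $\pi\equiv 0$ gives the lower bound, and Jensen's inequality under the equivalent martingale measure $\mathbb Q^*$ with bounded market price of risk $\lambda^*=\sigma_F(\sigma_F^*\sigma_F)^{-1}\mu_F$ gives $J^0\le \frac{\log\gamma}{\gamma}+\frac{1}{\gamma}H_t(\mathbb Q^*|\mathbb P)\le \frac{\log\gamma}{\gamma}+\frac{C^2T}{2\gamma}$; the integrability needed to make $\int\pi^*\,dF/F$ a true $\mathbb Q^*$-martingale is correctly extracted from the admissibility condition (\ref{adm}) together with the boundedness of $\sigma_F^*\sigma_F$, exactly as in the $L^2$-estimates of the Appendix. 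Your route is more elementary, yields the strictly stronger conclusion that $J^0$ (hence $J$) is uniformly \emph{bounded}, and dispenses with the $C^1$-regularity parts of Assumption \ref{MainAss}. What it does not deliver — and what the paper's heavier machinery buys — is the smoothness of $J^0$ and the linear growth (and, in the examples, Lipschitz continuity) of $J^0_x$, which the paper relies on elsewhere, e.g.\ to define the perturbed entropy measure $\mathbb Q^0$ in Section \ref{sec:exCL}. For the lemma as stated, however, your proof is complete.
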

\begin{proof}
Since the claim $C^u _{t,T}$ is bounded in $(x,z)$ uniformly in
the controls $u$ (cf. Assumption \ref{AssC}), it suffices to prove that $J^0 (t,x)$, the
log-value function of the pure investment problem, has quadratic
growth in $x$ uniformly in $t$.

First of all, repeating exactly the same arguments as in the
proof of Theorem 3.1 in \cite{Pham02}, we get that if the PDE
(\ref{eq:HJB_J0}) with terminal condition $J^0(T,x) = \frac{\log
\gamma}{\gamma}$ admits a unique solution belonging to
$C^{1,2}([0,T) \times \mathbb R^m)\cap C^0 ([0,T]\times \mathbb
R^m)$, whose $x$-derivative has linear growth, then such a
solution coincides with $J^0 (t,x)$.

To conclude the proof, we need to show that the PDE
(\ref{eq:HJB_J0}) has a unique smooth solution as above, whose
$x$-derivative has linear growth. We adapt to our setting
the arguments in the proof of \cite[Th. 4.1]{Pham02} under his
Assumptions (H3a). Indeed, notice that our Assumption \ref{MainAss}(i), together with the Lipschitz continuity of $b$ postulated in Assumption \ref{ass-coeff}(ii), corresponds to $\textbf{(H3a)}(i)$ in \cite{Pham02}. Moreover Assumption \ref{MainAss}(ii) implies $\textbf{(H3a)}(ii)$, while Assumption \ref{MainAss}(iii) guarantees $\textbf{(H2)}(b)$ (see Remark 2.3 in \cite{Pham02}).

Consider the PDE (\ref{pdeJF}) in the case $q=0$, with
$F(w)$ replaced by
\begin{equation}
F_k (w) := \inf_{\alpha \in \mathcal B_k} \left\{-\tilde F (\alpha) -
\langle \alpha, w \rangle \right\} , \quad w\in \mathbb R^m
,\label{Fk}
\end{equation}
where $\mathcal B_k$ is the centered ball in $\mathbb R^m$ with
radius $k\geq 1$. Recall that $\tilde F$ is the convex conjugate of $F$ and that is given by
\[ \tilde F(\alpha) = -\frac{1}{2} \langle \alpha , B^{-1} \alpha \rangle, \quad \alpha \in \textrm{Im}(B),\]
while it equals $-\infty$ otherwise. Proceeding as in the proof of \cite[Th.
4.1]{Pham02}, we can apply Theorem 6.2 in \cite{fleming_rishel},
giving the existence of a unique solution $J^{0,k} \in
C^{1,2}([0,T) \times \mathbb R^m)\cap C^0 ([0,T]\times \mathbb
R^m)$ with polynomial growth in $x$, for the parabolic PDE
\begin{equation}\label{pdeJFk}
\begin{array}{c}
\displaystyle J^{0,k}_t + \frac{1}{2 \gamma} \langle (\sigma_F^*
\sigma_F)^{-1} \mu_F,\mu_F\rangle 
+\gamma F_k (J^{0,k}_x) + \frac{1}{2} \textrm{tr} \left(\Sigma^*
\Sigma J^{0,k}_{xx}\right) = 0,
\end{array}
\end{equation}
with terminal condition $J^{0,k} (T,x) = \frac{\log
\gamma}{\gamma}$. Notice that the convex conjugate $\tilde F$ of $F$, appearing in the definition of $F_k
(w)$ in (\ref{Fk}), can take the value $-\infty$, which
is not a problem here since this value does not contribute to the
infimum over $\alpha$.

The next step consists, as in \cite{Pham02}, in using a
stochastic control representation of the solution $J^{0,k}$ to
derive a uniform bound on the derivative, independently of the
approximation. Indeed, from standard verification arguments we
get that
\[ J^{0,k}(t,x) = \inf_{\alpha \in \mathbb B_k} \mathbb
E^{\mathcal Q} \left[\int_t ^T \Lambda (s, X_s , \alpha_s )
ds \mid X_t =x\right],\]
where
\[ \Lambda (s,x,\alpha) = \frac{1}{2 \gamma} \langle
(\sigma_F^* \sigma_F)^{-1} \mu_F, \mu_F\rangle (s,x) - \gamma
\tilde F (\alpha), \]
where $\mathbb B_k$ is the set of $\mathbb R^m$-valued adapted
processes $\alpha$ bounded by $k$, and the controlled dynamics of
$X$ under $\mathcal Q$ is given by
\[ dX_s = (\bar b(s,X_s) - \gamma \alpha_s ) ds +
\Sigma^* (s,X_s) dW^{\mathcal Q}_s ,\]
where $W^{\mathcal Q}$ is a $d$-dimensional Brownian motion under
$\mathcal Q$ and $\bar b$ has been defined in (\ref{bbar}). Notice that, since $\Lambda$ takes the value
$-\infty$ outside the image of $B$, then the optimal Markov
control evaluated along the optimal path $\hat \alpha (s,\hat
X_s)$ will lie on $\mathrm{Im} (B)$ a.s. for every $s\in [t,T]$.
We can use Lemma 11.4 in \cite{FleSon} and the same estimates as
in \cite[Lemma 4.1]{Pham02} to obtain
\[ | J_x ^{0,k} (t,x) | \le C (1+ |x|) , \quad \forall (t,x) \in
[0,T]\times \mathbb R^m , \]
for some positive constant $C$, which does not depend on $k$. Now
we argue as in the proof of \cite[Th. 4.1]{Pham02}, Case (H3a),
to deduce that $|\hat \alpha_k (t,x)| \le C$ for all $t\in [0,T]$
and $|x| \le M$ for some positive constant $C$ (independent of
$k$) and an arbitrarily large $M>0$. Therefore, we get that, for
$k \le C$, $F_k (J^{0,k}_x) = F(J^{0,k}_x)$ for all $(t,x) \in
[0,T]\times \mathcal B_M$. Letting $M$ tend to $+\infty$, we
finally get that $J^{0,k}$ is a smooth solution with linear
growth on derivative to the PDE (\ref{pdeJF}) (with $q=0$). To
conclude, we have that $J^0 = J^{0,k}$ for $k$ sufficiently
large, giving, in particular, that $J^0$ has quadratic growth in
$x$ uniformly in $t$. Therefore the proof is complete.
\end{proof}

\end{document}